\def\be#1\ee{\begin{eqnarray}#1\end{eqnarray}}
\definecolor{shadecolor}{rgb}{0.88,0.91,0.95}       
\newcommand{\R}{\mathbb{R}}
\newcommand{\SO}{\operatorname{SO}}
\renewcommand{\|}[1]{\left| \left| #1 \right| \right|}
\renewcommand{\d}{\operatorname{d} \!}
\renewcommand{\tilde}[1]{\widetilde{#1}}
\newcommand\Item[1][]{%
  \ifx\relax#1\relax  \item \else \item[#1] \fi
  \abovedisplayskip=0pt\abovedisplayshortskip=0pt~\vspace*{-\baselineskip}}
\numberwithin{equation}{section}
\newtheorem{proposition}[equation]{Proposition}
\crefname{proposition}{Proposition}{Propositions}
\crefname{lemma}{Lemma}{Lemmas}
\newtheorem{corollary}[equation]{Corollary}
\crefname{corollary}{Corollary}{Corollaries}
\newtheorem*{corollary*}{Corollary}
\crefname{corollary*}{Corollary}{Corollaries}
\newtheorem{theorem}[equation]{Theorem}
\crefname{theorem}{Theorem}{Theorems}
\crefname{conjecture}{Conjecture}{Conjectures}
\newtheorem*{theorem*}{Theorem}
\crefname{theorem*}{Theorem}{Theorems}
\crefname{claim}{Claim}{Claims}
\theoremstyle{remark}
\crefname{question}{Question}{Questions}
\newtheorem{definition}[equation]{Definition}
\crefname{definition}{Definition}{Definitions}
\crefname{example}{Example}{Examples}
\newtheorem{remark}[equation]{Remark}
\crefname{remark}{Remark}{Remarks}
\crefname{assumption}{Assumption}{Assumptions}
\author{Emma Albertini, Daniel Platt}
\date{\today}
\title{Local uniqueness of the Black String with small circle size}
\begin{document}

\maketitle

\begin{abstract}
    In this article we study uniqueness of the Black String, i.e. the product of 4-dimensional Schwarzschild space with a circle of length $L$. 
    In \cite{Albertini2024}, this was reduced to a non-linear elliptic PDE, and we use this setup to show that for small $L$ the Black String is infinitesimally rigid as a Ricci-flat metric. 
    Using a fixed point theorem, we prove that this implies local uniqueness, i.e. there exist no other Ricci-flat metrics near the Black String, and we give bounds for the size of the neighborhood in which the Black String is unique. 
    We compare this with the toy problem of a scalar field satisfying an elliptic equation that was already solved in \cite{Albertini2024} using different methods. 
    In this case we can use the fixed point theorem method to prove not just a local but a global statement.
\end{abstract}

\tableofcontents

\section{Introduction}
In both astrophysics and quantum gravity, the uniqueness, often referred to as the no-hair property, of stationary black hole solutions plays a crucial role. In four-dimensional spacetime, all regular, stationary, and asymptotically flat solutions of the Einstein–Maxwell equations are completely characterized by just three parameters: mass, angular momentum, and electric charge. These solutions are uniquely described by the Kerr–Newman family of black holes. The collaborative effort of the General Relativity community led to various versions of the proof of the uniqueness for vacuum gravity \cite{Israel_1967, Robinson_1975, Carter_1971, Bunting_1987} and Einstein–Maxwell theory \cite{Mazur_1982, Masood-ul-Alam_1992, Ruback_1988}. It is of great interest to study whether a version of the black hole uniqueness theorem still holds under more general assumptions, such as in the presence of matter fields or in more than four spacetime dimensions, see e.g. \cite{Hollands_2012}.
Results on $4$-dimensional black holes can serve as inspiration in higher dimensions, for instance in cases which exploit the symmetry properties of the “static or axisymmetric” argument in four-dimensions. However, when considering product manifolds, simple connectedness of the event horizon, which is extensively used in the uniqueness proof for Kerr–Newman \cite{Hawking:1971vc, Hawking:1973uf}, does not hold.  
In this context, a noteworthy example exhibiting non-uniqueness are certain static black hole solutions in Kaluza-Klein Theory, i.e. black holes on a product space $M^4 \times S^1$, where ${M}^4$  is a four-dimensional Ricci flat manifold. Black holes are not unique, even when restricting to the case of asymptotics of $\mathcal{M}^4 \times S^1$, where $\mathcal{M}^4$ denotes the Minkowski metric.
In this article we study uniqueness of Black String solutions $\text{Sch}_4 \times S^1_L$, where $\text{Sch}_4$ denotes the $4$-dimensional Schwarzschild space, and $S^1_L=[0,L]/\sim$ denotes a circle of length $L$.
In particular, we will pay attention to the case of small circle sizes $0<L \ll 1$.
This line of investigation was started in \cite{Albertini2024}.
These Black Strings are known to be linearly stable but it is unknown whether they are unique or not.
We do not achieve showing global uniqueness, but prove a weaker rigidity result, namely we show that there exist no nearby Ricci-flat metrics, quantifying exactly the meaning of \emph{nearby}.

The method has several novelties in the context of uniqueness theorems. 
In particular, it exploits the elliptic character of Einstein equations and uses elliptic analysis techniques, which lead to an application of Banach's Fixed Point Theorem. 
There are two features of our set up which make our discussion generalizable. 

Firstly, it raises the question about the family of solutions to Einstein equations on higher than five-dimensional manifolds, in particular on product manifolds of the form ${M}^4 \times X$, where ${M}^4$  is a four dimensional manifold and $X$ is any $n$-dimensional compact space. In particular, solutions to the classical Einstein-Maxwell equations in six-dimensional space-time which have the
form of a product of four-dimensional constant curvature space-time with a 2-sphere, namely the Salam-Sezgin Model, may be the natural first example to consider. 
Moreover, since this method relies on a cavity, it can provide a framework to classify black hole solutions in Anti-de-Sitter space.

For $\beta,L,R > 0$ consider the space $X_{KK}=S^1_\beta \times B^3(R) \times S^1_L$, where $B^3(R) \subset \mathbb{R}^3$.
Choose $\beta$ so that $S^1_\beta \times \R^3$ admits the Euclidean Schwarzschild metric.
As in \cite{Albertini2024}, we carry out our analysis on a cavity $B^3(R)$ rather than $\R^3$.
As we will take $0<L \ll 1$, the distance of the cavity wall $R$ will appear large compared to $L$ no matter the choice of $R$, and we find it convenient to fix it at $R=1$.

We use coordinates $(\tau,x,z) \in S^1_\beta \times B^3(R) \times S^1_L$, with $x \in B^3(R)$ and $z \in S^1_L$ and the notation $r=|x|$ and write $\d \Omega^2$ for the round metric on the sphere in sphere coordinates on $B^3(R)$.
In \cite{Albertini2024} it was shown that Ricci flat static SO(3) invariant metrics on $X_{KK}$ with the asymptotics of $\text{Sch}_4 \times S^1_L$ must be of the following form:
\begin{align}
\label{eq:metric-tensor}
g
=
r^2 \d \tau^2
+
e^{2A(r,z)}(4 f(r)^2 e^{4K(r,z)} \d r^2+\d z^2)
+
f(r) e^{2K(r,z)} \d \Omega^2.
\end{align}

Here, $f(r)= 1/(1 - r^2)^2$, and $A$ and $K$ are real valued functions on $X_L=B^3(R) \times S^1_L$.
(Notice that our convention for the metric differs from \cite[Equation 31]{Albertini2024} in that therein the metric depends on $L$ and the underlying smooth manifold is constant, while we take the point of view that the underlying manifold is $S^1_L=[0,L]/\sim$ endowed with a metric independent of $L$.)
Ricci-flatness of the metric translates into the following pair of equations for $A$ and $K$:
\begin{align} \label{equation:A-K-eqn-without-derivative}
\begin{split}
f(r)e^{2 K}\, \nabla^2K=1- e^{-2 ( K+A)}
\\
\left(A_{zz}+6 K_z^2+4 K_{zz}\right) e^{4 K} f(r)^2+\frac{A_r+r A_{rr}}{4 r}=0
\end{split}
\end{align}

The choice $A=K=0$ gives the \emph{Homogeneous Black String} metric, which is the Ricci-flat metric given as the product metric of the Schwarzschild metric in Euclidean signature on $S^1 \times \R^3$ and the circle.
For small circle size, this metric may be the unique Ricci-flat metric on $X_L$, and the following theorem gives some support to this (see \cref{theorem:metric-main-uniqueness}):

\begin{theorem*}
    There exists $L_0>0$ such that 
    for all $0<L<L_0$, the following is true:
    if SO(3) invariant functions $A,K \in L^2_3(X_L)$ solve the equations \cref{equation:A-K-eqn-without-derivative} and satisfy
    \[
        \|{A}_{L^2_3}
        \leq
        c L^{1/2},
        \|{K}_{L^2_3}
        \leq
        c L^{1/2},
    \]
    with $A=0$, $K=0$ on $\partial X_L$, then $A=0, K=0$ everywhere.
\end{theorem*}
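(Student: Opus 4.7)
\noindent
The plan is to cast the system \cref{equation:A-K-eqn-without-derivative} as a fixed point equation for the linearization at the Homogeneous Black String and to derive uniqueness from a Lipschitz estimate on the nonlinearity combined with infinitesimal rigidity. Write the pair of equations schematically as $\mathcal{L}(A,K)=Q(A,K)$, where $\mathcal{L}$ denotes the formal linearization at $(A,K)=(0,0)$ and $Q$ collects all terms that are at least quadratic in $(A,K)$ and their derivatives. Using the expansions $1-e^{-2(K+A)}=2(K+A)+O((K+A)^2)$, $e^{2K}=1+2K+O(K^2)$, and $e^{4K}=1+O(K)$, the linearization reads
\begin{align*}
f(r)\,\nabla^2 K-2K-2A &= 0,\\
f(r)^2\bigl(A_{zz}+4K_{zz}\bigr)+\tfrac{1}{4r}(rA_r)_r &= 0,
\end{align*}
while $Q(A,K)$ is a polynomial in $(A,K)$ and their first and second derivatives in which every monomial has total degree at least two.

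\noindent
The key analytic input is \emph{infinitesimal rigidity}: $\mathcal{L}$ is an isomorphism from the space of SO(3)-invariant pairs in $L^2_3(X_L)$ vanishing on $\del X_L$ onto a suitable target (say $L^2_1(X_L)$), with a bound on $\|{\mathcal{L}^{-1}}$ that I would track explicitly in $L$. Fourier-decomposing in $z$ splits $\mathcal{L}$ into a family of elliptic problems on $B^3(R)$, one per mode $n\in\Z$. The non-zero modes acquire a large zeroth-order term $\sim(2\pi n/L)^2$ from the $\del_z^2$ parts and are straightforwardly invertible for small $L$. The zero mode is the genuinely non-trivial piece: the second linearized equation reduces to $(rA_0')'=0$, and Dirichlet data together with regularity at $r=0$ forces $A_0\equiv 0$; the first equation then becomes $f(r)\nabla^2 K_0=2K_0$, and testing against $K_0$ followed by integration by parts forces $K_0\equiv 0$. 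Injectivity combined with Fredholm theory for the resulting elliptic system upgrades this to invertibility.

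\noindent
The remaining ingredient is a Lipschitz estimate of the form
\begin{align*}
\|{Q(A_1,K_1)-Q(A_2,K_2)}_{L^2_1}
\leq
C(L)\bigl(\|{(A_1,K_1)}_{L^2_3}+\|{(A_2,K_2)}_{L^2_3}\bigr)\|{(A_1-A_2,K_1-K_2)}_{L^2_3},
\end{align*}
which uses that $L^2_3$ is a Banach algebra on the four-dimensional domain $X_L$ and that derivatives of quadratic terms can be controlled in $L^2_1$ via the Sobolev embedding $L^2_1(X_L)\hookrightarrow L^4(X_L)$. Given two solutions $(A_i,K_i)$ with $\|{(A_i,K_i)}_{L^2_3}\leq cL^{1/2}$, applying $\mathcal{L}^{-1}$ to the identity $\mathcal{L}(A_1-A_2,K_1-K_2)=Q(A_1,K_1)-Q(A_2,K_2)$ yields
\begin{align*}
\|{(A_1-A_2,K_1-K_2)}_{L^2_3}
\leq
2c\,C(L)\,\|{\mathcal{L}^{-1}}\,L^{1/2}\,\|{(A_1-A_2,K_1-K_2)}_{L^2_3}.
\end{align*}
Choosing $L_0$ so that the prefactor is strictly less than one for $L<L_0$ forces $A_1=A_2$ and $K_1=K_2$. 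Applying this with $(A_2,K_2)=(0,0)$, which trivially satisfies the hypotheses, gives $A=K=0$.

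\noindent
The main obstacle is to keep the $L$-dependence of the constants honest. Sobolev embeddings on $X_L=B^3(R)\times S^1_L$ degenerate as $L\to 0$ (most visibly $L^2_1(S^1_L)\hookrightarrow L^\infty(S^1_L)$ has operator norm of order $L^{-1/2}$), and both the multiplication constant $C(L)$ and $\|{\mathcal{L}^{-1}}$ may accumulate negative powers of $L$, the latter principally from the zero mode. The hypothesis $cL^{1/2}$ is precisely the threshold at which these losses are absorbed by the quadratic scaling of $Q$, and the delicate task is to verify that the product $C(L)\,\|{\mathcal{L}^{-1}}\,L^{1/2}$ actually tends to zero with $L$ rather than merely remaining bounded.
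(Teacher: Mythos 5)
There is a genuine gap, and it sits exactly where you flag it in your last paragraph: the contraction does not close in your setup. Working directly with \cref{equation:A-K-eqn-without-derivative}, the quadratic terms are estimated by putting one factor in $C^0$ via the Sobolev embedding on $X_L=B^3(R)\times S^1_L$, which costs $L^{-1/2}$; so your multiplication constant satisfies $C(L)\sim L^{-1/2}$, while $\|{\mathcal{L}^{-1}}$ is at best $O(1)$ (the zero mode of $\mathcal{L}$ has no $L$-dependence at all). The product $C(L)\,\|{\mathcal{L}^{-1}}\,L^{1/2}$ is then merely bounded, not small, and the hypothesis $\|{(A,K)}_{L^2_3}\leq cL^{1/2}$ is not enough to force a contraction. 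The paper's resolution is structural, not a sharper bookkeeping of the same constants: it first differentiates the system in $z$ and works with $\psi=K_z$, $u=A_z$, which have \emph{mean zero} on every circle fibre. For mean-zero functions the Poincar\'e inequality $\|{f}_{L^p}\leq cL\|{\nabla f}_{L^p}$ (\cref{proposition:poincare-inequality}) supplies a positive power of $L$ every time the integration operator $I=\partial_z^{-1}$ appears, and it is precisely these gains (visible in each of the estimates $T1$--$T4$ of \cref{proposition:metric-nonlinear-estimate}) that make the nonlinear Lipschitz constant $O(L)$ and the fixed-point argument work. Without the mean-zero structure this mechanism is unavailable, so your approach as written cannot reach a contraction constant that tends to zero.

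Two further points. First, your zero-mode analysis is both shaky and, more importantly, not the right statement. The integration by parts for $f(r)\nabla^2K_0=2K_0$ is not clean because the weight $f(r)=1/(1-r^2)^2$ is non-constant (and blows up at $r=R=1$), so the cross term involving $f_r$ is not sign-definite; and in any case what the argument ultimately requires is the \emph{nonlinear} statement that $z$-independent solutions of \cref{equation:A-K-eqn-without-derivative} with zero boundary data vanish, not injectivity of the linearization on the zero mode. The paper gets this from the explicit classification of homogeneous solutions in \cite{Albertini2024} (\cref{proposition:metric-independent-of-z-trivial}); after the fixed-point step has shown $K_z=A_z=0$, that classification finishes the proof. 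Second, your claim that the non-zero Fourier modes of $\mathcal{L}$ are ``straightforwardly invertible'' because of a large term $\sim(2\pi n/L)^2$ overlooks that the off-diagonal coupling $4e^{4\bar K}\partial_{zz}/(r^2-1)^4$ is of the \emph{same} order $n^2/L^2$; the paper's injectivity proof (\cref{proposition:metric-linear-estimate}) has to eliminate $u$, pass to a fourth-order equation for $\psi$, and compare $(2\pi n/L)^6$ against $(2\pi n/L)^4$ to see that the good term wins for small $L$. So the route you propose differs from the paper's in the two places where the real work is done, and in both places the step as proposed would fail.
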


Note that we chose different boundary conditions from \cite{Albertini2024}, since the zero-boundary condition simplifies the proof.
We briefly explain the proof idea of this theorem.
It has become a standard technique in geometry to construct solutions to elliptic PDE in three steps:
(1) constructing an approximate solution and bounding its error,
(2) bounding the operator norm of the inverse of the linearisation of the equation,
(3) applying a fixed-point theorem to deduce that there exists a genuine solution near the approximate solution.
This method was introduced in \cite{Taubes1982} and has since been applied to countless other problems in geometry.
Depending on the fixed-point theorem applied, one usually also obtains a \emph{local uniqueness} statement for solutions constructed in this way.
I.e., the method constructs a genuine solution to the elliptic PDE and guarantees that there are no other nearby solutions.

We apply the same method, but instead of beginning with an approximate solution, we begin with the exact known solution to the elliptic equation.
In our case, that is the Black String metric solving the Ricci-flat equation.
The rest of the analysis is then unchanged, and we obtain a \emph{local uniqueness} statement stating that there exist no other nearby Ricci-flat metrics, where we can quantify what \emph{nearby} means.

One difference to examples in the literature is that we do not work with the equations \cref{equation:A-K-eqn-without-derivative} directly, but \emph{with their $z$-derivative}.
While the linearisation of this new system of equations is still a differential operator, it has the interesting consequence that the non-linear terms of the system become integro-differential operators, which requires some adaptation.

Unfortunately, this does not lead to a global uniqueness statement, because we cannot rule out solutions that are far away from the Black String metric.
There is, however, hope that this can be overcome in the future, and we present one toy example in which we can use the method above to prove global uniqueness of solutions to a non-linear elliptic PDE.
The example is the same as in \cite[Section II]{Albertini2024}, in which we are looking for scalar fields on a circle times Minkowski space with a fixed potential $V: \R \rightarrow \R$.
For this problem, we reprove the result from \cite[Section II]{Albertini2024}, which is achieved in \cref{theorem:main-uniqueness}:

\begin{theorem*}
    There exists $L_0>0$ such that the following is true:
    for all $L<L_0$ we have that if $\phi \in L^2_{2,\text{boundary}-0}(X_L)$ satisfies
    \begin{align}
        \label{equation:phi-equation}
        \Delta \phi - V'(\phi)=0,
    \end{align}
    then $\phi$ is independent of the $z$-direction.
    Here, $L^2_{2,\text{boundary}-0}(X_L)$ denotes the completion of functions with compact support with respect to the $L^2_2$-norm.
\end{theorem*}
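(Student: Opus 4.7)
The plan is to reduce the problem to a linear-looking equation for $\psi := \partial_z \phi$ and to show $\psi \equiv 0$, from which $z$-independence of $\phi$ follows at once. Differentiating $\Delta \phi - V'(\phi) = 0$ in $z$ gives
\begin{align*}
\Delta \psi - V''(\phi)\psi = 0,
\end{align*}
with $\psi|_{\partial B^3(R) \times S^1_L} = 0$ inherited from the trace of $\phi$ and with $\int_0^L \psi\, dz = 0$ fibrewise by $L$-periodicity. The point of working with $\psi$ rather than with $\phi$ itself is that $\psi$ has zero $z$-mean, opening the way to a Poincaré inequality on $S^1_L$ that becomes arbitrarily strong as $L \to 0$. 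The equation for $\psi$ is linear in $\psi$ but nonlinear overall, since the coefficient $V''(\phi)$ depends on $\phi$, and $\phi$ itself is recovered from $\psi$ via integration in $z$.

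To set up the fixed-point framework announced in the introduction, fix a $z$-independent reference $\phi_*$ (for instance $\phi_* \equiv 0$) and rewrite the equation as
\begin{align*}
(\Delta - V''(\phi_*))\psi = \bigl(V''(\phi) - V''(\phi_*)\bigr)\psi.
\end{align*}
Writing $\phi(x,z) = \bar\phi(x) + \int_0^z \psi(x,z')\, dz'$ with the average $\bar\phi$ determined by $\psi$ through the $z$-averaged equation $\Delta_x \bar\phi = \overline{V'(\phi)}$ turns the right-hand side into an integro-differential operator in $\psi$. The map
\begin{align*}
T: \psi \longmapsto (\Delta - V''(\phi_*))^{-1}\bigl[(V''(\phi) - V''(\phi_*))\psi\bigr]
\end{align*}
has $\psi = 0$ as an obvious fixed point, and I would aim to show $T$ is a global contraction on the zero-$z$-mean subspace of $L^2_{2,\text{boundary}-0}(X_L)$, so that Banach's fixed-point theorem delivers uniqueness of the zero solution on all of the space rather than only in a small ball.

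The central estimate is coercivity of $\Delta - V''(\phi_*)$ on zero-$z$-mean functions: Fourier-decomposing in $z$, the $k$-th mode with $k \neq 0$ sees $-\Delta_x + (2\pi k/L)^2 - V''(\phi_*)$, whose spectrum is bounded below by $(2\pi/L)^2 - \sup|V''|$, giving $\|(\Delta - V''(\phi_*))^{-1}\| = O(L^2)$ on the mean-zero subspace for $L$ small. Combined with a Lipschitz estimate on $\psi \mapsto V''(\phi)\psi$ (which demands control of $V'''$), this produces a contraction factor tending to $0$ with $L$, and fixes $L_0$. The principal obstacle I anticipate is securing uniform $L^\infty$ bounds on $V''(\phi)$ and $V'''(\phi)$: since $X_L$ is four-dimensional, $L^2_2$ does not embed into $L^\infty$, so either a growth hypothesis on $V$ (as in \cite[Section II]{Albertini2024}) or a bootstrap through elliptic regularity must be inserted at this step. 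A secondary subtlety is handling the coupling through $\bar\phi$ correctly, as one must verify that the map $\psi \mapsto \bar\phi$ implicitly defined by the averaged equation is itself Lipschitz with the right constant.
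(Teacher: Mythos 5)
Your skeleton --- differentiate in $z$, exploit the coercivity of the linearised operator on the zero-$z$-mean subspace via Fourier modes, and run a Banach fixed-point argument with $\psi=0$ as the known fixed point --- is exactly the paper's strategy, and your Fourier-mode computation is essentially the injectivity proof of \cref{proposition:linear-estimate-scalar}. But there are two genuine gaps. First, the a priori estimates are missing, and they are not optional. The nonlinearity is superlinear in $\psi$ (a product of $\psi$ with a quantity that itself depends on $\psi$), so its Lipschitz constant on a ball of radius $\rho$ grows with $\rho$; there is no global contraction, and the theorem is only ``global in $\phi$'' because the paper first proves (\cref{proposition:a-priori-estimates}) that \emph{any} solution automatically satisfies $\|{\psi}_{L^2_2}\le cL^{1/2}$, $\|{I(\psi)}_{C^0}\le cL$ and $\|{\overline{\phi}}_{C^0}\le c$, which places the relevant fixed point inside the small ball where the contraction holds. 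The hardest of these is the uniform $C^0$ bound on $\phi$: you correctly flag it as the principal obstacle but leave it unresolved. (Note also that $B^2(R)\times S^1_L$ is three-dimensional, not four-dimensional, so $L^2_2\hookrightarrow C^0$ does hold --- but with constant $cL^{-1/2}$, which is useless for a uniform bound on $\phi$ itself.) The paper obtains the $C^0$ bound not from Sobolev embedding but from the structural hypothesis $tV'(t)\ge -c^2$ in \cref{eq:potlbounds}, which bounds $\Delta(\phi^2)$ from below, combined with the mean value inequality (\cref{proposition:mean-value-inequality}) on the universal cover and the $L^2$-smallness of $\phi$.

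Second, your choice of linearisation point and your treatment of $\overline{\phi}$ create an avoidable difficulty. You linearise at a fixed $\phi_*\equiv 0$ and propose to recover $\overline{\phi}$ from $\psi$ through the $z$-averaged equation; the Lipschitz estimate for $\psi\mapsto (V''(\phi)-V''(\phi_*))\psi$ then requires controlling $\overline{\phi}_1-\overline{\phi}_2$ in terms of $\psi_1-\psi_2$ through an implicit nonlinear elliptic problem, with no gain of powers of $L$. The paper instead keeps $\overline{\phi}$ as \emph{fixed data} coming from the assumed solution, linearises at $\overline{\phi}$ (so $\mathscr{L}=\Delta-V''(\overline{\phi})$, still injective for small $L$ since only $V''\ge-c^2$ is used), and the nonlinearity $\mathscr{N}(\psi)=\psi\,(V''(I(\psi)+\overline{\phi})-V''(\overline{\phi}))$ then depends on $\psi$ only through $I(\psi)$, whose $C^0$-norm is $O(L)$ by the Poincar\'e inequality; this is precisely where the contraction factor $cL$ of \cref{proposition:nonlinear-estimate} comes from. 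If you restructure your argument to (i) prove the a priori bounds first and (ii) freeze $\overline{\phi}$ rather than solving for it, your approach becomes the paper's.
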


The article is structured as follows:
in \cref{section:background} we present some facts from analysis that are used later.
In \cref{section:scalar-field-analysis} we prove \cref{theorem:main-uniqueness}, proving the uniqueness for the toy problem of the scalar field.
Last, in \cref{section:metric-analysis} we prove the local uniqueness theorem \cref{theorem:metric-main-uniqueness}.

\textbf{Acknowledgments.}
The authors thank Toby Wiseman for helpful conversations.
D.P. was supported by the Eric and Wendy Schmidt AI in Science Postdoctoral Fellowship, funded by Schmidt Sciences. EA was supported by the STFC Consolidated Grant ST/W507519/1.

\section{Background}
\label{section:background}

For some Riemannian manifold (potentially open or with boundary) $(M,g)$, we will make frequent use of the Sobolev norms $\|{\cdot}_{L^p_k}$, i.e. the completion of $C^\infty(M)$ with respect to the Sobolev norm
\[
    \|{f}_{L^p_k}
    :=
    \sum_{i=0}^k
    \|{\nabla^i f}_{L^p}.
\]
We will also use the notation $\|{\cdot}_{L^p_{k,\text{boundary}-0}}$ for the the completion of smooth functions with compact support.
If $M$ is compact with boundary, then the space $L^p_{k,\text{boundary}-0}(M)$ can be thought of as a space of functions which are zero on the boundary of $M$.

For elliptic operators acting on Sobolev spaces, we have the following statement which is often said to be part of \emph{standard elliptic theory}.
It is called \emph{elliptic regularity}, and the special case of an injective elliptic operator is called \emph{injectivity estimate}.

\begin{remark}
    We use many estimates for the analysis in the article which feature different constants.
    We follow the custom of denoting these constants by $c$, but it is understood this letter can refer to different constants from line to line.
    Important for us is only that a constants exists, not its exact value.
\end{remark}

\begin{theorem}[Proposition 11.14 and 11.16 in \cite{Taylor2011}]
\label{theorem:elliptic-regularity}
    Let $\overline{X}$ be a compact manifold with boundary $\partial X$.
    Let $P$ be an elliptic operator of order $m$ on $\overline{X}$ and let $B_j$ for $j \in \{1,\dots,l\}$ be differential operators of order $m_j \leq m-1$, defined on a neighbourhood of $\partial X$.
    If the boundary value problem
    \[
        Pu=f
        \text{ on }
        X,
        \quad
        B_j u=g_j
        \text{ on }
        \partial X
        \text{ for }
        j \in \{1,\dots,l\}
    \]
    is regular (see \cite[p.454]{Taylor2011} for the definition) and if $u \in L^2_m(\overline{X})$ such that $Pu \in L^2_k(\overline{X})$ and $B_j u \in L^2_{m+k-m_j-\frac{1}{2}}(\partial X)$, then $u \in L^2_{m+k}(\overline{X})$ and
    \[
    \|{u}_{L^2_{m+k}(\overline{X})}
    \leq
    c
    \left(
    \|{Pu}_{L^2_k(\overline{X})}
    +
    \sum_{j=1}^l
    \|{B_j u}_{L^2_{m+k-m_j-\frac{1}{2}}(\partial X)}
    +
    \|{u}_{L^2_{m+k-1}(\overline{X})}
    \right)
    \]
    for some constant $c$ depending only on $\overline{X}$, $P$, $B_j$ and $g_j$ but not on $u$.
\end{theorem}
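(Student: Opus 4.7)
The plan is to reduce the global estimate to local model problems via a partition of unity subordinate to a cover of $\overline{X}$ by interior balls and boundary collar charts that straighten $\partial X$ locally to $\R^{n-1} \times \{0\} \subset \R^n_+$. In each chart, after freezing coefficients at a point and treating the variation of coefficients as a lower-order perturbation, one constructs a pseudodifferential parametrix and reads off a local Sobolev estimate; summing these local estimates and absorbing all commutator terms into the $\|{u}_{L^2_{m+k-1}(\overline{X})}$ remainder yields the global inequality, while iterating the estimate upgrades \emph{a priori} control to the genuine regularity claim $u \in L^2_{m+k}$.

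In the interior charts the argument is classical. Ellipticity of $P$ produces a pseudodifferential parametrix $Q$ of order $-m$ satisfying $QP = I + S$ where $S$ is smoothing. Applying $Q$ to $Pu = f$ gives $u = Qf - Su$, and the standard $L^2_k$ mapping properties of $Q$ and $S$ deliver both the interior portion of the desired bound and the interior regularity. The only place where boundary data enter is through the localising cutoffs intersecting $\partial X$, and these terms are handled in the boundary charts.

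The substantive work is at the boundary. After straightening, I would freeze coefficients, Fourier-transform in the $n-1$ tangential variables, and, for each nonzero tangential covector $\xi'$, reduce the principal part of the boundary value problem to a system of ordinary differential equations in the normal variable with the principal symbols of the $B_j$ imposed at $0$. The hypothesis that $(P, B_1, \dots, B_l)$ is \emph{regular} in the sense of \cite{Taylor2011} is precisely the Lopatinski-Shapiro condition, guaranteeing that this ODE boundary value problem has a unique decaying solution for every $\xi' \neq 0$. The resulting symbol can then be quantised into a pseudodifferential parametrix on the half-space. Combined with the trace theorem, which asserts that the boundary trace of an $L^2_{m+k}(\overline{X})$ function lies in $L^2_{m+k-\frac{1}{2}}(\partial X)$, and hence that applying a differential operator of order $m_j$ then taking trace costs $m_j + \frac{1}{2}$ derivatives, this produces exactly the Sobolev orders $m + k - m_j + \frac{1}{2}$ that appear on the right-hand side of the estimate.

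The main obstacle is the boundary parametrix construction and the careful bookkeeping of Sobolev indices under restriction to $\partial X$: the combination of tangential pseudodifferential calculus, normal ODE analysis, and the trace theorem must be aligned so that all terms land in compatible function spaces. Once the boundary parametrix is in place, tangential difference quotients combined with the interior estimate allow one to trade normal smoothness for tangential smoothness inductively, converting the a priori inequality into the stated regularity and estimate. In practice this is exactly the package encoded by \cite{Taylor2011}, which is why it is invoked here rather than reproved.
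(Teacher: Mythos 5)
The paper does not prove this statement at all: it is quoted verbatim (as Propositions 11.14 and 11.16) from \cite{Taylor2011}, and your outline --- interior parametrix, boundary straightening, coefficient freezing, the Lopatinski--Shapiro interpretation of regularity, and the trace-theorem bookkeeping of the Sobolev indices --- is precisely the argument given in that reference. Your proposal is therefore correct and takes essentially the same approach as the (cited) proof; as you note yourself, this is why the result is invoked rather than reproved.
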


\begin{corollary}
\label{corollary:injectivity-estimate}
    Under the assumptions of \cref{theorem:elliptic-regularity}, if $P$ is injective, then
    \[
    \|{u}_{L^2_{m+k}(\overline{X})}
    \leq
    c
    \left(
    \|{Pu}_{L^2_k(\overline{X})}
    +
    \sum_{j=1}^l
    \|{B_j u}_{L^2_{m+k-m_j-\frac{1}{2}}(\partial X)}
    \right)
    \]
    for some constant $c$ (potentially different from the one in \cref{theorem:elliptic-regularity}) depending only on $\overline{X}$, $P$, $B_j$ and $g_j$ but not on $u$.
\end{corollary}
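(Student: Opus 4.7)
The plan is to derive the injectivity estimate from the elliptic regularity estimate of \cref{theorem:elliptic-regularity} by absorbing the lower-order term $\|{u}_{L^2_{m+k-1}(\overline{X})}$ on the right-hand side. This is the classical Peetre-type argument, combining elliptic regularity with Rellich's compactness theorem and the injectivity hypothesis.

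First I would apply \cref{theorem:elliptic-regularity} directly to obtain
\[
    \|{u}_{L^2_{m+k}(\overline{X})}
    \leq
    c_1
    \left(
    \|{Pu}_{L^2_m(\overline{X})}
    +
    \sum_{j=1}^l
    \|{B_j u}_{L^2_{m+k-m_j+\frac{1}{2}}(\partial X)}
    +
    \|{u}_{L^2_{m+k-1}(\overline{X})}
    \right).
\]
The task then reduces to showing that the $\|{u}_{L^2_{m+k-1}(\overline{X})}$ term can be dominated by the other terms. I would do this by contradiction: suppose no constant $c$ works. Then there exists a sequence $(u_n) \subset L^2_{m+k}(\overline{X})$ with
\[
    \|{u_n}_{L^2_{m+k}(\overline{X})} = 1,
    \qquad
    \|{P u_n}_{L^2_m(\overline{X})} + \sum_{j=1}^l \|{B_j u_n}_{L^2_{m+k-m_j+\frac{1}{2}}(\partial X)} \longrightarrow 0.
\]

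Next, since $\overline{X}$ is compact, the inclusion $L^2_{m+k}(\overline{X}) \hookrightarrow L^2_{m+k-1}(\overline{X})$ is compact by Rellich's theorem, so after passing to a subsequence $u_n$ converges in $L^2_{m+k-1}(\overline{X})$ to some $u_\infty$. Plugging the differences $u_n - u_{n'}$ into the elliptic regularity estimate shows that $(u_n)$ is Cauchy in $L^2_{m+k}(\overline{X})$, hence converges to $u_\infty$ in this stronger norm; in particular $\|{u_\infty}_{L^2_{m+k}(\overline{X})} = 1$. Continuity of $P$ and the $B_j$ on Sobolev spaces then gives $Pu_\infty = 0$ and $B_j u_\infty = 0$ for all $j$. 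The injectivity hypothesis on the boundary value problem forces $u_\infty = 0$, contradicting $\|{u_\infty}_{L^2_{m+k}(\overline{X})} = 1$.

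The only genuinely delicate point is interpreting ``$P$ is injective'' correctly: it must mean injectivity of the full boundary value problem $(P, B_1, \dots, B_l)$, i.e. that the only $u \in L^2_{m+k}(\overline{X})$ with $Pu = 0$ and all $B_j u = 0$ is $u = 0$. With this reading the argument above is routine; the rest is just an invocation of the Rellich embedding, which is available because $\overline{X}$ is compact.
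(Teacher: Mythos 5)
Your argument is correct and is essentially the same as the paper's: the paper simply cites the proof of \cite[Proposition 1.5.2]{Joyce2000}, which is exactly this contradiction argument combining the elliptic estimate of \cref{theorem:elliptic-regularity} with Rellich compactness and injectivity of the boundary value problem. Your remark that ``injective'' must be read as injectivity of the full problem $(P,B_1,\dots,B_l)$ is the right reading and matches how the estimate is used later in the paper.
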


\begin{proof}
    This is proved by repeating the proof from \cite[Proposition 1.5.2]{Joyce2000}.
\end{proof}

Later on, we will use the following Poincaré inequality comparing the integral norm of a function with its derivative.
Importantly, the inequality gives an explicit formula for the constant comparing the two quantities:

\begin{proposition}[Equation 7.44 in \cite{Gilbarg2001}]
\label{poincareG}
    Let $\Omega \subset \R^n$ be open.
    For $1<p \leq \infty$ there exists a constant $c>0$ depending only on $p$ and $n$ such that for all $u \in L^p_{\text{boundary}-0}(\Omega)$:
    \[
    \|{u}_{L^p}\leq c |\Omega|^{1/n}\|{\nabla u}_{L^p}.
    \]
\end{proposition}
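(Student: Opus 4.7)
The plan is to first reduce to the case $u \in C^\infty_c(\Omega)$ by density and extend by zero to all of $\R^n$. For such $u$, I would derive an integral representation expressing $u$ pointwise in terms of $\nabla u$. The cleanest way is to integrate radially: for any unit vector $\omega \in S^{n-1}$ and any $x$,
\[
    u(x) = -\int_0^\infty \partial_r u(x+r\omega)\,dr,
\]
since $u$ vanishes at infinity. Averaging over $\omega$ and converting to Cartesian coordinates yields the standard formula
\[
    u(x) = -\frac{1}{n\omega_n}\int_{\R^n} \frac{(y-x)\cdot \nabla u(y)}{|y-x|^n}\,dy,
\]
where $\omega_n$ is the volume of the unit ball. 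Taking absolute values gives the pointwise bound
\[
    |u(x)| \leq \frac{1}{n\omega_n} \int_\Omega \frac{|\nabla u(y)|}{|y-x|^{n-1}}\,dy,
\]
which exhibits $u$ as controlled by the Riesz potential of order $1$ applied to $|\nabla u|$.

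The next step is to bound this Riesz potential in $L^p(\Omega)$. The key geometric input is that for any fixed $x$, symmetric decreasing rearrangement of $\Omega$ about $x$ only increases $\int_\Omega |y-x|^{-(n-1)}\,dy$, and so
\[
    \int_\Omega \frac{dy}{|y-x|^{n-1}} \leq \int_{B_R(x)} \frac{dy}{|y-x|^{n-1}} = n\omega_n R,
    \qquad R := \left(\tfrac{|\Omega|}{\omega_n}\right)^{1/n}.
\]
This is exactly where the measure $|\Omega|^{1/n}$ enters, rather than a diameter. Then I would apply Hölder in the $y$-variable using the weight $|y-x|^{-(n-1)}$ split as $|y-x|^{-(n-1)/p} \cdot |y-x|^{-(n-1)/p'}$, raise to the $p$-th power, and apply Fubini together with the rearrangement bound twice (once for the $y$-integral, once for the $x$-integral) to obtain
\[
    \|{u}_{L^p(\Omega)} \leq c(n,p)\,|\Omega|^{1/n}\,\|{\nabla u}_{L^p(\Omega)}.
\]
The endpoint $p=\infty$ is even more elementary and can be handled directly from the pointwise representation.

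I expect the main obstacle to be the rearrangement/symmetrization step: everything else is bookkeeping with Hölder and Fubini, but getting the bound to scale with $|\Omega|^{1/n}$ rather than with $\mathrm{diam}(\Omega)$ really uses that the kernel $|y-x|^{-(n-1)}$ is radially decreasing, so its integral over a set of given measure is maximized on a ball. An alternative route, which avoids rearrangement, is to use the Gagliardo–Nirenberg–Sobolev inequality $\|{u}_{L^{p^*}} \leq c\|{\nabla u}_{L^p}$ for $1 \leq p < n$ (with $p^* = np/(n-p)$) and then use Hölder to interpolate: since $1/p - 1/p^* = 1/n$, one gets $\|{u}_{L^p} \leq |\Omega|^{1/n}\|{u}_{L^{p^*}}$, giving the result in this range. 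The case $p \geq n$ then needs to be treated separately, for instance via the Riesz potential approach above, so the unified argument still has the rearrangement inequality as its technical core.
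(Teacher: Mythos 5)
Your argument is correct and is essentially the standard proof of (7.44) in Gilbarg--Trudinger, which is all the paper does here (it cites the result without reproving it): the potential representation $u(x)=-\tfrac{1}{n\omega_n}\int (y-x)\cdot\nabla u(y)\,|y-x|^{-n}\,dy$ followed by the $L^p$ bound for the Riesz potential $V_{1/n}$, whose proof rests on exactly the symmetrization bound $\int_\Omega|y-x|^{-(n-1)}dy\le n\omega_n(|\Omega|/\omega_n)^{1/n}$ that you identify as the technical core. Your Hölder--Fubini bookkeeping closes correctly (the exponents combine to give $\|{u}_{L^p}\le\omega_n^{-1/n}|\Omega|^{1/n}\|{\nabla u}_{L^p}$), so no gap.
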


We will carry out the analysis of the problem from the introduction using Sobolev spaces.
However, it turns out that many of the functions we consider have \emph{mean zero} in a certain sense, and we will exploit that for these functions some improved estimates hold, that do not hold for Sobolev functions in general.

In the following sections we consider product spaces where one factor is the circle of length $L$.
That is, the closed interval $[0,L]$ with endpoints identified, which can be written as $S^1_L=[0,L]/\sim$.
In \cref{section:scalar-field-analysis} the space $B^2(R) \times S^1_L$, and in \cref{section:metric-analysis} the space $B^3(R) \times S^1_L$ is considered.
By abuse of notation, we denote both spaces as $X_L$, and the factors that are not $S^1_L$ by $M$, i.e.
\begin{align}
    \label{equation:spaces-abuse-of-notation}
    X_L
    =
    M \times S^1_L
    =
    \begin{cases}
        B^2(R) \times S^1_L
        &
        \text{ in \cref{section:scalar-field-analysis}}
        \\
        B^3(R) \times S^1_L
        &
        \text{ in \cref{section:metric-analysis}}.
    \end{cases}
\end{align}

Using this notation, we make the following definition:

\begin{definition}
    The function space
    \[
    L^p_{k,\text{mean-}0}(X_L)
    :=
    \left\{
    f \in L^p_k(X_L):
    \int_{\{x\} \times S^1}f \d z=0
    \text{ for all }
    x \in M
    \right\}
    \]
    is called \emph{$L^p_k$-functions with mean zero}.
\end{definition}

Even though mean zero functions need not have zero boundary, we still have a Poincaré inequality for them.
That is because such functions have a zero on every $\{x\} \times S^1_L \subset M \times S^1_L$, so they can be viewed as a collection of functions with zero boundary on an interval, where the ordinary Poincaré inequality applies.
The following proposition is the statement of the Poincaré inequality for functions with mean zero:

\begin{proposition}
\label{proposition:poincare-inequality}
    Let $1<p \leq \infty$ and $f \in L^p_{mean-0}(X_L)$, then 
    \[
    \|{f}_{L^p}\leq cL\, \|{\nabla f}_{L^p}.
    \]

\end{proposition}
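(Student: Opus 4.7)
The plan is to reduce the inequality on $X_L = M \times S^1_L$ to the one-dimensional Poincaré inequality of \cref{poincareG} applied fiberwise along the $S^1_L$ factor. The key observation, foreshadowed in the discussion preceding the statement, is that a continuous mean-zero function on a circle must have a zero by the intermediate value theorem, and cutting the circle at this zero exhibits it as a function on an interval of length $L$ vanishing at both endpoints.

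Concretely, I would first reduce to smooth $f$ by density. For each $x \in M$, set $f_x(z) := f(x,z)$, pick any $z_0 = z_0(x) \in [0,L]$ with $f_x(z_0) = 0$, and unwrap $S^1_L$ starting at $z_0$ to view $f_x$ as a smooth function on $[0,L]$ vanishing at both endpoints, hence in $L^p_{\text{boundary}-0}((0,L))$. \cref{poincareG} applied in dimension $n=1$ on $\Omega = (0,L)$ then yields, with a constant depending only on $p$,
\[
\|{f_x}_{L^p(S^1_L)} \leq c\, L\, \|{\partial_z f_x}_{L^p(S^1_L)}.
\]
For $p<\infty$, raising to the $p$-th power and integrating over $M$ via Fubini gives
\[
\|{f}_{L^p(X_L)}^p = \int_M \|{f_x}_{L^p(S^1_L)}^p\, dx \leq c^p L^p \int_M \|{\partial_z f_x}_{L^p(S^1_L)}^p\, dx = c^p L^p\, \|{\partial_z f}_{L^p(X_L)}^p,
\]
and since $|\partial_z f| \leq |\nabla f|$ pointwise, the claim follows after taking $p$-th roots; the $p=\infty$ case is handled the same way with a supremum replacing the integral.

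The main potential pitfall is whether the choice $z_0(x)$ can be made measurably in $x$, which would seem to be needed in order to invoke Fubini. This is not actually an obstacle, because the right-hand side of the fiberwise bound involves only $\|{\partial_z f_x}_{L^p(S^1_L)}$, a quantity that is jointly measurable in $(x,z)$ and entirely independent of the choice of $z_0(x)$. The density step from smooth mean-zero functions to general $L^p_{\text{mean-0}}$ functions is routine: mollify $f$ and subtract the (small) fiberwise mean of the mollification to restore the mean-zero condition before taking limits.
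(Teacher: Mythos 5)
Your proof is correct and follows essentially the same route as the paper's: restrict to each fiber $\{x\}\times S^1_L$, use that a mean-zero function has a zero there to apply the one-dimensional Poincaré inequality of \cref{poincareG} with constant $cL$, and then integrate over $M$ via Fubini. The extra details you supply (the intermediate value theorem, the irrelevance of the measurability of $z_0(x)$, and the density step) are sensible elaborations of steps the paper leaves implicit.
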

\begin{proof}
Since for all $x \in M$ the function $f \mid _{\{x\} \times S^1}$ has a zero, we can apply \cref{poincareG} and obtain
\begin{align}
\label{equation:poincare-proof1}
\|{f\mid _{\{x\}\times S^1_L}}_{L^p}
\leq
cL \|{\partial_z f\mid _{\{x\}\times S^1_L}}_{L^p}
\leq
cL \|{(\nabla f)\mid _{\{x\}\times S^1_L}}_{L^p}
\end{align}
Therefore,
\begin{align*}
\|{f}_{L^p}^p
&= \int_r \|{f\mid _{\{x\}\times S^1_L}}_{L^p}^p \d r
\\
&\leq
cL^p
\int_r \|{(\nabla f)\mid _{\{x\}\times S^1_L}}_{L^p}^p \d r
\\
&=
cL^p\|{\nabla f}_{L^p}^p,
\end{align*}
where we used Fubini's theorem in the first and last step, and we used \cref{equation:poincare-proof1} raised to the $p$-th power in the second step.
Taking the $p$-th root shows the claim.  
\end{proof}

We finish with another statement that makes use of the special shape of $X_L$:
an injectivity estimate with explicit control over the appearing constant in terms of $L$.

\begin{proposition}
    \label{proposition:linear-estimate-rescaling-argument}
    Under the assumptions of \cref{theorem:elliptic-regularity}, assume that $P$ is a differential operator on $\overline{X}=\overline{X_L}=\overline{M} \times S^1_L$ that is independent of the $S^1_L$-direction.
    Let $L_0>0$ such that $P$ is injective on $X_{L_0}$.
    
    Then there exist constants $c>0$ and $L_0'$ such that for all $0<L<L_0'$ we have
    \[
    \|{u}_{L^2_{m+k}(\overline{X})}
    \leq
    c
    \left(
    \|{Pu}_{L^2_m(\overline{X})}
    +
    \sum_{j=1}^l
    \|{B_j u}_{L^2_{m+k-m_j-\frac{1}{2}}(\partial X)}
    \right).
    \]
    The constant $c$ depends only on $\overline{X}$, $P$, $B_j$ and $g_j$ but not on $u$ or $L$.
\end{proposition}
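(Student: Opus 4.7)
The plan is to reduce the desired estimate on $X_L$ to the fixed injectivity estimate for $P$ on $X_{L_0}$ provided by \cref{corollary:injectivity-estimate}, by decomposing $u$ into its $z$-average and its mean-zero complement. Concretely, I would write $u = u_0 + u_\perp$ with
\[
u_0(x) := \frac{1}{L} \int_{S^1_L} u(x,z)\, \d z,
\qquad
u_\perp := u - u_0 \in L^2_{m+k,\text{mean-}0}(X_L).
\]
Since $P$ has $z$-independent coefficients it preserves this decomposition (and the same will be assumed, or arranged by replacing $B_j$ by its $z$-average, for the boundary operators), so $Pu_0 = (Pu)_0$ and $Pu_\perp = (Pu)_\perp$, with analogous identities on the boundary. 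Because averaging in $z$ is a norm-decreasing projection, it suffices to bound $u_0$ and $u_\perp$ separately in terms of $Pu$ and $B_j u$.

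For $u_0$, which is $z$-independent, I would view it as a function on $\overline{X_{L_0}}$ and apply \cref{corollary:injectivity-estimate} with the constant $c_0$ coming from the hypothesised injectivity of $P$ on $X_{L_0}$. Each Sobolev norm appearing on either side differs between $\overline{X_L}$ and $\overline{X_{L_0}}$ by the single overall factor $\sqrt{L/L_0}$ (including the boundary norms over $\partial \overline{M} \times S^1_\bullet$), so this factor cancels and the estimate descends to $X_L$ with the same $L$-independent constant $c_0$.

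For $u_\perp$ I would apply the elliptic regularity estimate \cref{theorem:elliptic-regularity} on $\overline{X_L}$ to produce
\[
\|{u_\perp}_{L^2_{m+k}(\overline{X_L})}
\leq
c \Big(\|{Pu_\perp}_{L^2_m(\overline{X_L})} + \sum_j \|{B_j u_\perp}_{L^2_{m+k-m_j+\frac{1}{2}}(\partial X_L)} + \|{u_\perp}_{L^2_{m+k-1}(\overline{X_L})}\Big).
\]
Because $u_\perp$ has mean zero in $z$ and integration in $z$ commutes with every covariant derivative on the product manifold ($x$-derivatives pull through, $z$-derivatives integrate to zero), each component of $\nabla^i u_\perp$ is again mean-zero; applying \cref{proposition:poincare-inequality} componentwise then gives $\|{u_\perp}_{L^2_{m+k-1}} \leq cL \|{u_\perp}_{L^2_{m+k}}$. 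For $L$ sufficiently small this term is absorbed into the left-hand side, and combining with the estimate for $u_0$ yields the claim; the injectivity of $P$ then follows immediately, since $Pu=0$ with zero boundary data forces the right-hand side to vanish.

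The main obstacle I anticipate is the implicit $L$-dependence of the constant $c$ in \cref{theorem:elliptic-regularity} applied on $\overline{X_L}$: the cited theorem gives a constant that depends on the underlying manifold, and a priori nothing rules out its blow-up as $L \to 0$. For the product family $\overline{M} \times S^1_L$ with product metric this can be handled by a covering argument: choose $n = n(L) \in \N$ so that $nL$ lies in a fixed compact interval, say $[L_0/2, L_0]$; pull $u_\perp$ back via the $n$-fold cover $\overline{X_{nL}} \to \overline{X_L}$ to a $\Z_n$-invariant function $\tilde{u}_\perp$; apply \cref{theorem:elliptic-regularity} on $\overline{X_{nL}}$, where the constant is uniformly bounded over this compact range of circle lengths; and descend back to $\overline{X_L}$ using that all norms appearing in the estimate scale by a common factor of $\sqrt{n}$ which cancels from both sides.
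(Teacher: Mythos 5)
Your proposal is correct, but it is organised quite differently from the paper's proof. The paper does not decompose $u$ at all: it pulls the whole function back through the $k$-fold covering $\xi: S^1_{L_0}\to S^1_L$, so that $\xi^*u$ lives on the \emph{fixed} manifold $X_{L_0}$, applies the injectivity estimate of \cref{corollary:injectivity-estimate} there (legitimate because the $z$-independence of $P$ makes it commute with $\xi^*$), and lets the common normalisation factor cancel from both sides; the case $L_0/L\notin\mathbb{Z}$ is handled by choosing $kL$ within $\epsilon$ of $L_0$ and absorbing a small error. Your route instead splits $u$ into its $z$-average $u_0$ and the mean-zero remainder $u_\perp$, invokes the hypothesis of injectivity on $X_{L_0}$ only for the $z$-independent piece $u_0$, and controls $u_\perp$ by elliptic regularity combined with \cref{proposition:poincare-inequality} and an absorption argument, with the covering trick surviving only as a device to make the elliptic-regularity constant uniform in $L$. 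The paper's argument is shorter; yours is more informative, since it isolates the fact that injectivity is only ever needed for $z$-independent data and that the oscillatory modes are automatically controlled for small $L$ --- which is essentially the same mechanism that underlies the injectivity proofs in \cref{proposition:linear-estimate-scalar,proposition:metric-linear-estimate}. Both arguments share the same two soft spots, so you are not worse off than the paper: the boundary operators $B_j$ must implicitly be assumed $z$-independent (or commute with averaging/pullback), and some uniformity of the elliptic constants over a compact range of circle lengths near $L_0$ must be granted --- the paper's ``small error term that can be absorbed'' requires exactly the same input as your ``uniformly bounded over this compact range''.
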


\begin{proof}
    \Cref{corollary:injectivity-estimate} gives the injectivity estimate
    \begin{align}
    \label{equation:inj-estimate-in-scaling-proof}
    \|{u}_{L^2_{m+k}(\overline{X_{L_0}})}
    \leq
    c
    \left(
    \|{Pu}_{L^2_m(\overline{X_{L_0}})}
    +
    \sum_{j=1}^l
    \|{B_j u}_{L^2_{m+k-m_j-\frac{1}{2}}(\partial X_{L_0})}
    \right)
    \end{align}
    Now let $0<L<L_0$ such that $L_0/L=k \in \mathbb{Z}$.
    Define $\xi: S^1_{L_0} \rightarrow S^1_L$ as $\xi([z]):=[z]$, i.e. $\xi$ wraps around $S^1_L$ exactly $k$ times.
    Then
    \begin{align}
        \label{equation:linear-estimate-rescaling-argument}
        \begin{split}
        \|{u}_{L^2_{m+k}(\overline{X_{L}})}
        &=
        \frac{1}{k} \|{\xi^*u}_{L^2_{m+k}(\overline{X_{L_0}})}
        \\
        &\leq
        \frac{c}{k}
        \left(
        \|{P\xi^*u}_{L^2_m(\overline{X_{L_0}})}
        +
        \sum_{j=1}^l
        \|{B_j \xi^*u}_{L^2_{m+k-m_j-\frac{1}{2}}(\partial X_{L_0})}
        \right)
        \\
        &=
        c
        \left(
        \|{Pu}_{L^2_m(\overline{X_{L}})}
        +
        \sum_{j=1}^l
        \|{B_j u}_{L^2_{m+k-m_j-\frac{1}{2}}(\partial X_{L})}
        \right)
        \end{split}
    \end{align}
    where in the second step we used the injectivity estimate for $P$ on $X_{L_0}$, which proves the claim for this choice of $L$.
    
    We obtain the claim for $L_0/L \notin \mathbb{Z}$ in three steps.

    \textbf{Step 1: for any $\epsilon>0$ there exists $L_0'$ such that for any $L<L_0'$ there is some integer multiple $kL$ of $L$ such that $|kL-L_0| < \epsilon$.}

    Set $L_0':=\epsilon$ and let $L<L_0'$.
    We seek $k$ such that
    \begin{align*}
        L_0-\epsilon &< kL < L_0 + \epsilon, \text{ or equivalently}
        \\
        \frac{L_0-\epsilon}{L} & < k < \frac{L_0+\epsilon}{L}.
    \end{align*}
    Here, the interval $\left( \frac{L_0-\epsilon}{L}, \frac{L_0+\epsilon}{L} \right)$ has length $\frac{2 \epsilon}{L} > 2$.
    Any interval of length $2$ or larger contains an integer, i.e. there exists $k$ satisfying the claim.

    \textbf{Step 2: for any $b>0$, $\epsilon>0$ and $L, \tilde{L}>b$ such that $|L-\tilde{L}|< \epsilon$ we have that norms on $X_L$ and $\tilde{X}_L$ are similar.}

    We first make the statement precise.
    Let $b>0$, $\epsilon>0$ and $L, \tilde{L}>b$ such that $|L-\tilde{L}|< \epsilon$.
    Define the map $\xi: S^1_{L} \rightarrow S^1_{\tilde{L}}$ as $\xi([z]) := \left[ \frac{\tilde{L}}{L} z\right]$, i.e. $\xi$ stretches the circle of length $L$ into a circle of (very similar) length $\tilde L$.
    Then we explain there exists some constant $c(\epsilon)$ satisfying $c(\epsilon) \rightarrow 1$ as $\epsilon \rightarrow 0$ such that
    \begin{align}
    \label{equation:L-tilde-L-comparison}
    \frac{1}{c(\epsilon)}
    \|{\xi^*u}_{L^2_s(\overline{X_L})}
    \leq
    \|{u}_{L^2_s(\overline{X_{\tilde{L}}})}
    \leq
    c(\epsilon)
    \|{\xi^*u}_{L^2_s(\overline{X_L})}
    \end{align}
    for all $u \in L^2_s(X_{\tilde{L}})$.
    We write out the proof for the left hand side for $s=1$.
    The estimates $L, \tilde{L}>b$ and $|L-\tilde{L}|< \epsilon$ imply $\frac{L}{\tilde{L}}<1+\frac{\epsilon}{b}$ and $\frac{\tilde{L}}{L}<1+\frac{\epsilon}{b}$.
    Thus:
    \begin{align*}
        \|{\xi^*u}_{L^2_s(\overline{X_L})}
        &=
        \|{\xi^*u}_{L^2(\overline{X_L})}
        +
        \|{\nabla \xi^*u}_{L^2(\overline{X_L})}
        \\
        &\leq
        \underbrace{
        \left(
        \int_M
        \int_{S^1_L}
        (\xi^* u)^2 
        \d z \, \d x
        \right)^{1/2}
        }_
        {=\left(
        \int_M
        \int_{S^1_{\tilde{L}}}
        u^2 
        \d z \, \d x
        \right)^{1/2}
        \left( \frac{L}{\tilde{L}} \right) ^{1/2}
        }
        +
        \underbrace{
        \left(
        \int_M
        \int_{S^1_L}
        |\nabla \xi^* u|^2 
        \d z \, \d x
        \right)^{1/2}
        }_
        {=\left(
        \int_M
        \int_{S^1_{\tilde{L}}}
        |\nabla u|^2 
        \d z \, \d x
        \right)^{1/2}
        \cdot
        \left( \frac{\tilde{L}}{L} \right)
        \cdot
        \left( \frac{L}{\tilde{L}} \right) ^{1/2}
        }
        \\
        &\leq
        \left( 1+\frac{\epsilon}{b} \right)^{1/2}
        \|{u}_{L^2_s(\overline{X_{\tilde{L}}})}
        \\
        &=:
        c(\epsilon) \|{u}_{L^2_s(\overline{X_{\tilde{L}}})}.
    \end{align*}

    Higher values of $s$ and the right hand side of \cref{equation:L-tilde-L-comparison} are proved analogously.
    Similar estimates hold for norms of $Pu$ and $B_ju$.

    \textbf{Step 3: for $\epsilon$ small enough and $L_0'$ from the Step 1 we have that for all $L<L_0'$ the estimate \cref{equation:linear-estimate-rescaling-argument} with constant $2c$ holds.}

    We fix $\epsilon$ later.
    For $L<L_0'$ let $k$ be an integer such that $|kL-L_0|<\epsilon$, which exists by Step 1.
    We will choose $\epsilon<L_0/2$ which implies $kL>L_0/2$.
    Thus, setting $b:=L_0/2$ in Step 2, we obtain:
    \begin{align*}
        \|{u}_{L^2_{m+k}(\overline{X_{L}})}
        &=
        \frac{1}{k} \|{u}_{L^2_{m+k}(\overline{X_{kL}})}
        \\
        &\leq
        c(\epsilon)
        \frac{1}{k} \|{u}_{L^2_{m+k}(\overline{X_{L_0}})}
        \\
        &\leq
        c(\epsilon)
        \frac{c}{k}
        \left(
        \|{Pu}_{L^2_m(\overline{X_{L_0}})}
        +
        \sum_{j=1}^l
        \|{B_j u}_{L^2_{m+k-m_j-\frac{1}{2}}(\partial X_{L_0})}
        \right)
        \\
        &\leq
        c(\epsilon)^2
        \frac{c}{k}
        \left(
        \|{Pu}_{L^2_m(\overline{X_{kL}})}
        +
        \sum_{j=1}^l
        \|{B_j u}_{L^2_{m+k-m_j-\frac{1}{2}}(\partial X_{kL})}
        \right)
        \\
        &=
        c(\epsilon)^2
        c
        \left(
        \|{Pu}_{L^2_m(\overline{X_{L}})}
        +
        \sum_{j=1}^l
        \|{B_j u}_{L^2_{m+k-m_j-\frac{1}{2}}(\partial X_{L})}
        \right),
    \end{align*}
    where in the first step we used \cref{equation:linear-estimate-rescaling-argument};
    in the second step we used Step 2 from above;
    in the third step we used \cref{equation:inj-estimate-in-scaling-proof};
    in the fourth step we used Step 2 from above again;
    in the last step we used \cref{equation:linear-estimate-rescaling-argument}.
    By Step 2 we can choose $\epsilon$ small enough so that $c(\epsilon)^2c \leq 2c$, which proves the claim.
\end{proof}

\section{Scalar field analysis}
\label{section:scalar-field-analysis}

In this section we study our toy problem of a scalar with a fixed potential on $\R^2 \times S^1$ (more precisely, a large compact set with boundary therein).
In \cite{Albertini2024}, numerical solutions independent of the $S^1$-variable were exhibited.
However, heuristics suggest that for small circle size, all solutions must be independent of the $S^1$-variable, and that was already shown to be the case in the reference.
Here, we give an alternative proof for this fact.

To this end, fix $R>0$ and let $X_L=B^2(R) \times S^1_L$, where $B^2(R) \subset \mathbb{R}^2$ denotes the two-dimensional ball of radius $R$ and $S^1_L=[0,L]/\sim$ is the circle of length $L$, cf. \cref{equation:spaces-abuse-of-notation}.
Often we will denote points in $X_L$ as $(x,z) \in X_L$ with $x \in B^2(R)$ and $z \in S^1_L$ and $r=|x|$.

As noted before we will use the symbol $c$ to denote a positive constant independent of $L$.
The value of $c$ will be different from line to line, but to ease notation we do not introduce new symbols for each of these constants.

In the following, we consider $V: \R \rightarrow \R$ satisfying
\begin{align}
\label{eq:potlbounds}
t V'(t) \ge - c^2 \; , \quad V''(t) \ge - {c}^2 \quad \text{ for all } \; t \in \mathbb{R} \; .
\end{align}

Then we are interested in solutions to the semi-linear elliptic partial differential equation
\begin{align}
        \label{equation:phi-equation}
        \Delta \phi - V'(\phi)=0.
\end{align}

\begin{theorem}
    \label{theorem:main-uniqueness}
    There exists $L_0>0$ such that the following is true:
    for all $L<L_0$ we have that if $\phi \in L^2_{2,\text{boundary}-0}(X_L)$ satisfies
    \begin{align}
        \label{equation:phi-equation}
        \Delta \phi - V'(\phi)=0,
    \end{align}
    then $\phi$ is independent of the $z$-direction.
    Here, $L^2_{2,\text{boundary}-0}(X_L)$ denotes the completion of functions with compact support with respect to the $L^2_2$-norm.
\end{theorem}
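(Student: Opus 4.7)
The plan is to differentiate the equation in the $z$-direction, test the resulting linear equation for $\psi := \partial_z \phi$ against $\psi$ itself, and exploit the improved Poincaré inequality \cref{proposition:poincare-inequality} for mean-zero functions to force $\psi \equiv 0$.

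The first step is a regularity upgrade. Since $\dim X_L = 3$, the embedding $L^2_2 \hookrightarrow L^\infty$ holds, so $\phi$ is bounded and $V'(\phi), V''(\phi) \in L^\infty$; applying \cref{theorem:elliptic-regularity} to the linear equation $\Delta \phi = V'(\phi)$ then bootstraps $\phi$ to sufficient regularity (at least $L^2_3$) while preserving the zero boundary trace on $\partial B^2(R) \times S^1_L$. Setting $\psi := \partial_z \phi$, two properties are crucial: (i) since $\phi$ is $L$-periodic in $z$, $\psi$ is mean-zero on every $\{x\} \times S^1_L$; and (ii) since $\phi = 0$ on the lateral boundary $\partial B^2(R) \times S^1_L$ whose tangent space contains the $z$-direction, also $\psi = 0$ there in the trace sense. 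Differentiating the equation in $z$ gives that $\psi$ satisfies
\[
\Delta \psi - V''(\phi)\, \psi = 0
\]
weakly on $X_L$.

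Testing this equation against $\psi$ and integrating by parts over $X_L$ (the boundary term vanishes by (ii)) yields the energy identity
\[
\|{\nabla \psi}_{L^2}^2 + \int_{X_L} V''(\phi)\, \psi^2 = 0.
\]
The lower bound $V''(\phi) \geq -c^2$ from \cref{eq:potlbounds} gives $\|{\nabla \psi}_{L^2}^2 \leq c^2 \|{\psi}_{L^2}^2$. By (i) the function $\psi$ is mean-zero, so \cref{proposition:poincare-inequality} applies and gives $\|{\psi}_{L^2} \leq c L \|{\nabla \psi}_{L^2}$, hence
\[
\|{\nabla \psi}_{L^2}^2 \leq c L^2 \|{\nabla \psi}_{L^2}^2.
\]
Choosing $L_0 > 0$ so that $c L_0^2 < 1$ makes the prefactor on the right strictly less than $1$ for all $L < L_0$, forcing $\|{\nabla \psi}_{L^2} = 0$; a second application of \cref{proposition:poincare-inequality} then gives $\psi \equiv 0$, i.e. $\phi$ is independent of $z$.

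The step I expect to be the most delicate is justifying the energy identity and the application of Poincaré at the stated regularity $\phi \in L^2_{2,\text{boundary}-0}(X_L)$. One has to confirm that the bootstrapped $\phi$ is regular enough that $\psi = \partial_z \phi$ is an admissible test function with zero lateral trace, and that the pointwise bound $V''(\phi) \geq -c^2$ can be used inside an $L^2$-integral without further hypotheses on $V$. Both points should be handled by a density argument starting from compactly supported smooth approximations of $\phi$ in $L^2_{2,\text{boundary}-0}$, combined with the elliptic regularity \cref{theorem:elliptic-regularity} applied to the linear equation satisfied by $\phi$.
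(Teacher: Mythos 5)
Your proof is correct in outline, but it takes a genuinely different --- and considerably more elementary --- route than the paper. You test the differentiated \emph{nonlinear} equation $\Delta\psi - V''(\phi)\psi = 0$ directly against $\psi$, use the pointwise bound $V''(t)\geq -c^2$ (which by \cref{eq:potlbounds} holds for \emph{every} $t$, so no quantitative control on $\phi$ is needed), and close with the mean-zero Poincar\'e inequality \cref{proposition:poincare-inequality}. The paper instead splits the differentiated equation as $\mathscr{L}\psi+\mathscr{N}\psi=0$ with $\mathscr{L}=\Delta-V''(\overline{\phi})$, proves a priori estimates (\cref{proposition:a-priori-estimates}), an injectivity estimate for $\mathscr{L}$ (\cref{proposition:linear-estimate-scalar}), a Lipschitz bound for the integro-differential nonlinearity $\mathscr{N}$ (\cref{proposition:nonlinear-estimate}), and then invokes the Banach fixed point theorem. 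Your energy computation is in fact identical to the injectivity step of \cref{proposition:linear-estimate-scalar} --- there the Fourier expansion in $z$ plays exactly the role of your Poincar\'e inequality --- but you apply it to the full equation rather than to the linearisation, which lets you bypass the entire fixed-point apparatus. This is essentially the method of the earlier reference that \cref{section:scalar-field-analysis} explicitly sets out to \emph{re}prove by different means: the point of the paper's detour is to rehearse, on this toy problem, the machinery that is then indispensable for the metric problem in \cref{section:metric-analysis}, where no direct energy identity of your form is available. So your route buys simplicity and even a slightly cleaner statement here (you never need the $C^0$ a priori bound on $\phi$ or the mean value inequality), but it does not transfer to the main theorem, whereas the paper's route is uniform across both problems. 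The regularity caveats you flag are genuine but routine: $\partial_z\phi$ lies in $L^2_{1,\text{boundary}-0,\text{mean-}0}(X_L)$ because $z$-differentiation preserves compact support and periodicity, and the weak form of the differentiated equation follows from the chain rule for $V'(\phi)$ using the embedding $L^2_2\hookrightarrow C^0$ in dimension three.
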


We define $L^p_{k,\text{boundary}-0,\text{mean-}0}$ to be the space of functions in $L^p_{k,\text{mean-}0}$ which also have zero boundary data.
For a solution $\phi$ of \cref{equation:phi-equation}, write $\psi:= \partial_z \phi$, so that $\psi \in L^2_{1,\text{boundary}-0,\text{mean-}0}(X_L)$.
Also write
\begin{align}
    I:= (\partial _z)^{-1} : L^2_{k,\text{mean-}0}(X_L) \rightarrow L^2_{k+1,\text{mean-}0}(X_L)
\end{align}
for the integration operator in the $z$-direction.
If, for a solution $\phi$ of \cref{equation:phi-equation}, we write $\phi=\overline{\phi}+\mathring{\phi}$, where $\overline{\phi}$ is the average of $\phi$ in the $z$-direction,
then $I(\psi)=\mathring{\phi}$.

We can differentiate \cref{equation:phi-equation} with respect to $z$ and write it as an equation for $\psi$ as follows:

\begin{align}
    \label{equation:psi-equation}
    0
    &=
    \Delta (\psi) - \psi V''(\phi)
    =
    \underbrace{
    \Delta(\psi)
    -
    V''(\overline{\phi}) \psi
    }_{=:\mathscr{L}(\psi)}
    +
    \underbrace{
    \psi
    \left(
    V''(I(\psi)+\overline{\phi})-V''(\overline{\phi})
    \right)
    }_{=:\mathscr{N}(\psi)},
\end{align}
where $\mathscr{L}$ is a second order elliptic differential operator and $\mathscr{N}$ is a fully non-linear integral operator.

\subsection{A priori estimate}
\label{subsection:a-priori-estimate}

By using integration by parts on \cref{equation:phi-equation} we can prove certain \emph{a priori} estimates.
That is, assuming a solution $\phi$ to \cref{equation:phi-equation} exists, we prove that $\phi$ must satisfy certain smallness estimates.
This is the content of the following Proposition:

\begin{proposition}
    \label{proposition:a-priori-estimates}
    If $\phi \in L^\infty_{2,\text{boundary}-0}(X_L)$ is a solution to \cref{equation:phi-equation} and $\psi \in L^2_{2,\text{boundary}-0,\text{mean}-0}(X_L)$ is a solution to \cref{equation:psi-equation}, then
    \label{proposition:a-priori-estimates}
    \begin{align}
        \label{equation:L^2_1-a-priori}
        \|{\phi}_{L^2_1}
        &\leq
        cL^{1/2},
        \\
        \label{equation:C^0-a-priori}
        \|{\phi}_{C^0}
        &\leq
        c,\\
        \label{equation:L^2_2-a-priori}
        \|{\psi}_{L^2_2}
        &\leq
        c L^{1/2},
        \\
        \label{equation:I(psi)-a-priori}
        \|{I(\psi)}_{C^0}
        &\leq
        cL,
        \\
        \label{equation:overline-phi-a-priori}
        \|{\overline{\phi}}_{C^0}
        &\leq
        c.
    \end{align}
\end{proposition}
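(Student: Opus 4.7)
The plan is to derive each estimate from an integration-by-parts or maximum-principle argument driven by the hypotheses on $V$ in \cref{eq:potlbounds}, while exploiting the product structure $X_L = B^2(R) \times S^1_L$ to keep every constant uniform in $L$. For \cref{equation:L^2_1-a-priori}, pair $\Delta \phi = V'(\phi)$ with $\phi$ and integrate by parts using $\phi = 0$ on $\partial X_L$, giving $\|{\nabla \phi}_{L^2}^2 = -\int_{X_L} \phi V'(\phi) \leq c^2 |X_L| \leq cL$; a Poincar\'e inequality for zero-boundary functions, whose constant is controlled uniformly in $L$ by comparison with a fixed reference length $L_0$, then upgrades this to the full $L^2_1$ bound. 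For \cref{equation:C^0-a-priori}, set $u := \phi^2/2$, so that $\Delta u = |\nabla \phi|^2 + \phi V'(\phi) \geq -c^2$, and compare with the explicit $z$-independent barrier $v(x) := (c^2/4)(R^2 - |x|^2)$, which satisfies $\Delta v = -c^2$ and vanishes on $\partial X_L$; then $u - v$ is subharmonic with zero boundary values, so the weak maximum principle forces $\phi^2/2 = u \leq v \leq c^2 R^2/4$.

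For \cref{equation:L^2_2-a-priori}, the key observations are that $\psi = 0$ on $\partial X_L$ (differentiate the boundary condition on $\phi$ tangentially along $z$) and that $\psi$ is mean-zero in $z$ by periodicity. Testing the linearized equation $\Delta \psi = V''(\phi)\, \psi$ against $\psi$ and using $V''(\phi) \geq -c^2$ yields $\|{\nabla \psi}_{L^2}^2 \leq c^2 \|{\psi}_{L^2}^2 \leq cL$, where the last inequality follows from $\|{\psi}_{L^2} \leq \|{\nabla \phi}_{L^2} \leq cL^{1/2}$. I would then bootstrap from $L^2_1$ to $L^2_2$ via \cref{proposition:linear-estimate-rescaling-argument} applied to the Dirichlet Laplacian (which is injective on $X_L$ with $L$-uniform constants), together with $\|{V''(\phi)\, \psi}_{L^2} \leq c \|{\psi}_{L^2}$ coming from the just-established $C^0$ bound on $\phi$.

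The remaining two estimates are consequences of the $L^2_2$ control on $\psi$. For \cref{equation:I(psi)-a-priori}, Cauchy-Schwarz in $z$ bounds $|I(\psi)(x,z)| \leq L^{1/2}\,\|{\psi(x,\cdot)}_{L^2_z}$, and the Hilbert-valued 2D Sobolev embedding $L^2_2(B^2(R); L^2_z) \hookrightarrow L^\infty(B^2(R); L^2_z)$ (with constant depending only on $R$) then gives $\|{I(\psi)}_{C^0} \leq c L^{1/2} \|{\psi}_{L^2_2} \leq cL$. Finally, \cref{equation:overline-phi-a-priori} is immediate from $|\overline{\phi}(x)| \leq L^{-1}\int_0^L |\phi(x,z)|\,\d z \leq \|{\phi}_{C^0} \leq c$. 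The main technical point throughout is $L$-uniformity of the Poincar\'e, Sobolev, and elliptic-regularity constants; this is enabled by the cylindrical product structure, which allows each functional inequality to be reduced to a fixed reference domain, either $B^2(R)$ or $X_{L_0}$, with the $z$-factor handled separately by periodicity or by Cauchy-Schwarz.
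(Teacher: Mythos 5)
Your proof is correct, and on several of the five estimates it takes a genuinely different route from the paper's. For \cref{equation:L^2_1-a-priori} you reprove the energy estimate by pairing the equation with $\phi$ and integrating by parts; the paper simply cites \cite{Albertini2024}, and your argument is essentially the one given there. The most substantive difference is in \cref{equation:C^0-a-priori}: the paper also starts from $\Delta(\phi^2)\geq -c$, but then reflects $\phi$ across $r=R$ and applies the mean value inequality of \cref{proposition:mean-value-inequality} on balls of radius $\tfrac12$, combining this with \cref{equation:L^2_1-a-priori} and a count of how often such a ball wraps around $S^1_L$. Your comparison of $\phi^2/2$ with the explicit barrier $v=(c^2/4)(R^2-|x|^2)$ via the weak maximum principle is shorter, avoids the appendix entirely, and — unlike the paper's argument — does not use the $L^2_1$ bound at all, so the two estimates become logically independent. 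For \cref{equation:I(psi)-a-priori} you use Cauchy--Schwarz in $z$ plus a Hilbert-valued Sobolev embedding on the two-dimensional base, whereas the paper uses the Poincar\'e inequality \cref{proposition:poincare-inequality} followed by the anisotropic embedding $\|{f}_{C^0}\leq cL^{-1/2}\|{f}_{L^2_2}$; both yield the same factor $cL^{1/2}\|{\psi}_{L^2_2}$, and yours has the minor advantage of never invoking an $L$-degenerate Sobolev constant. Your derivation of \cref{equation:overline-phi-a-priori} directly from the definition of the average is likewise more immediate than the paper's decomposition $\overline{\phi}=\phi-I(\psi)$. The $L^2_2$ estimate is proved the same way in both (the injectivity estimate with $L$-uniform constant from \cref{proposition:linear-estimate-rescaling-argument} applied to $\Delta\psi=V''(\phi)\psi$); your intermediate testing step is harmless but not needed. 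The only point to flag is that your bound $\|{V''(\phi)\psi}_{L^2}\leq c\|{\psi}_{L^2}$ tacitly assumes $V''$ is bounded on the (bounded) range of $\phi$; the paper makes the analogous tacit assumption of polynomial growth of $V''$, so this is consistent but worth stating explicitly.
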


\begin{proof}
    \Cref{equation:L^2_1-a-priori} is shown in \cite{Albertini2024}.
    \Cref{equation:C^0-a-priori} can then be seen as follows:
    we have
    \[
        \Delta(\phi^2)
        =
        2\phi (\Delta \phi)+ (\nabla \phi)^2
        =
        2\phi V'(\phi)+(\nabla \phi)^2
        >
        c,
    \]
    where in the second step we used \cref{equation:phi-equation},
    and in the last step we used that $\phi V'(\phi)$ is bounded from below.
    Let $\tilde{X} \cong B^2(R) \times \mathbb{R}$ be the universal cover of $X_L$ endowed with the pullback metric.
    We define the function $\tilde{\phi}$ on $B^2(2R) \times \mathbb{R}$ by mirroring $\phi$ at $r=R$, i.e.
    \[
    \tilde{\phi}(r)
    =
    \begin{cases}
        \phi(r) & \text{ if } r \leq R\\
        -\phi(R-r) & \text{ if } r \geq R.
    \end{cases}
    \]
    For $y \in \tilde{X}$ let $B(y) \subset B^2(2R) \times \mathbb{R}$ be the ball of radius $\frac{1}{2}$ around $y$.
    We have
    \[
        \phi^2(y)
        =
        \tilde{\phi}^2(y)
        \leq
        c
        \int_{B(y)} \tilde{\phi}^2 \, dx
        +
        c
        \leq
        c L^{-1} ||\tilde{\phi}||_{L^2}^2+c
        \leq
        c L^{-1} ||\phi||_{L^2}^2+c
        \leq
        c,
    \]
    where in the second step we used \cref{proposition:mean-value-inequality},
    in the third step we used that the ball of radius $\frac{1}{2}$ wraps around $S^1_L$ fewer than $L^{-1}$ times,
    and in the last step we used \cref{equation:L^2_1-a-priori}.
    This proves the $C^0$-estimate.

    To see \cref{equation:L^2_2-a-priori}, note that:
    \begin{align*}
        \|{\psi}_{L^2_2}
        &\leq
        c
        \|{\Delta \psi}_{L^2}
        \leq
        c
        \|{\psi V''(\phi)}_{L^2}
        \leq
        c
        \|{\psi}_{L^2} \|{\phi}_{C^0}^{4}
        \leq
        cL^{1/2},
    \end{align*}
    where in the first step we used the injectivity estimate for $\Delta$ with a constant independent of $L$ from \cref{proposition:linear-estimate-rescaling-argument},
    in the second step we used the $z$-derivative of \cref{equation:phi-equation},
    and in the last step we used \cref{equation:L^2_1-a-priori,equation:C^0-a-priori}.
    In order to apply \cref{proposition:linear-estimate-rescaling-argument} we may need to make $L_0$ smaller, but by abuse of notation we denote the new constant by the same symbol.

    \Cref{equation:I(psi)-a-priori} follows at once from
    \[
        \|{I(\psi)}_{C^0}
        \leq
        cL \|{\psi}_{C^0}
        \leq
        cL^{1/2} \|{\psi}_{L^2_2}
        \leq
        cL,
    \]
    where we used the Poincaré inequality in the first step,
    used a Sobolev embedding theorem in the second step,
    and used \cref{equation:L^2_2-a-priori} in the last step.

    \Cref{equation:overline-phi-a-priori} then holds, because
    \[
        \|{\overline{\phi}}_{C^0}
        =
        \|{\phi-I(\psi)}_{C^0}
        \leq
        \|{\phi}_{C^0}
        +
        \|{I(\psi)}_{C^0}
        \leq
        c,
    \]
    where we used \cref{equation:C^0-a-priori,equation:I(psi)-a-priori} in the last step.
\end{proof}

\subsection{The linear operator}
\label{subsection:linear-operator}

In this section we prove an estimate for the inverse of the linear operator $\mathscr{L}$ from \cref{equation:psi-equation}.

\begin{proposition}
    \label{proposition:linear-estimate-scalar}
    There exists a constant $c>0$ independent of $L$ such that for small $L$ and all $\psi \in L^2_{2,\text{boundary}-0,\text{mean}-0}(X_L)$ we have that
    \begin{align}
        \label{equation:linear-estimate-scalar}
        \|{\psi}_{L^2_2}
        \leq
        c
        \|{\mathscr{L} \psi}_{L^2}.
    \end{align}
\end{proposition}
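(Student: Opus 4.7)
The plan is to use an energy estimate to bound $\|\psi\|_{L^2}$ and $\|\nabla \psi\|_{L^2}$ in terms of $\|\mathscr{L}\psi\|_{L^2}$ (with favourable powers of $L$), and then bootstrap via standard elliptic regularity for the Dirichlet Laplacian. The key point is that $\psi$ has mean zero in the $z$-direction and vanishes on the boundary, so the Poincaré inequality of \cref{proposition:poincare-inequality} applies and provides a factor of $L$ that beats the (possibly negative) zeroth-order term $-V''(\overline{\phi})\psi$.

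First I would test the equation $\mathscr{L}\psi = g$ against $\psi$ itself. Integration by parts (using that $\psi$ vanishes on $\partial X_L$) yields
\[
\int_{X_L} |\nabla \psi|^2 \, \mathrm{dvol}
= -\int_{X_L} \psi\, g \, \mathrm{dvol} - \int_{X_L} V''(\overline{\phi})\, \psi^2 \, \mathrm{dvol}.
\]
By \cref{equation:overline-phi-a-priori}, $\|\overline{\phi}\|_{C^0} \leq c$, so $|V''(\overline{\phi})| \leq c$ by smoothness of $V$. Combining Cauchy--Schwarz with \cref{proposition:poincare-inequality} (which applies since $\psi$ has mean zero) gives
\[
\|\nabla \psi\|_{L^2}^2
\leq \|\psi\|_{L^2}\,\|g\|_{L^2} + c\|\psi\|_{L^2}^2
\leq cL\,\|\nabla \psi\|_{L^2}\,\|g\|_{L^2} + cL^2\,\|\nabla \psi\|_{L^2}^2.
\]
For $L$ sufficiently small, the last term is absorbed into the left-hand side, producing $\|\nabla \psi\|_{L^2} \leq cL\,\|g\|_{L^2}$ and hence $\|\psi\|_{L^2} \leq cL^2\,\|g\|_{L^2}$ by Poincaré again.

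Next I would bootstrap to $L^2_2$ using the injectivity estimate for the flat Dirichlet Laplacian on $X_L$ supplied by \cref{proposition:linear-estimate-rescaling-argument} (which gives a constant independent of $L$ for small $L$). Writing $\Delta \psi = \mathscr{L}\psi + V''(\overline{\phi})\,\psi$, we get
\[
\|\psi\|_{L^2_2}
\leq c\,\|\Delta \psi\|_{L^2}
\leq c\,\|\mathscr{L}\psi\|_{L^2} + c\,\|\psi\|_{L^2}
\leq c(1 + L^2)\,\|\mathscr{L}\psi\|_{L^2}
\leq c\,\|\mathscr{L}\psi\|_{L^2},
\]
which is the desired estimate \cref{equation:linear-estimate-scalar}.

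The only mildly delicate point is the energy step: the zeroth-order coefficient $V''(\overline{\phi})$ need not have a favourable sign, so without the mean-zero hypothesis on $\psi$ there would be no way to absorb $\int V''(\overline{\phi})\psi^2$. It is precisely the mean-zero Poincaré inequality with its explicit $L$-dependence that turns this potentially bad term into a harmless $cL^2$-small perturbation. Once that absorption is done, the rest is standard elliptic regularity applied to the constant-coefficient operator $\Delta$ on $X_L$.
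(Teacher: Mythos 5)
Your proof is correct. The core mechanism is the same as the paper's --- your use of the mean-zero Poincaré inequality $\|{\psi}_{L^2}\leq cL\|{\nabla\psi}_{L^2}$ is exactly the Fourier-mode inequality $4\pi^2 n^2/L^2\gg c^2$ that the paper exploits --- but the way you convert it into the quantitative estimate is genuinely different. The paper first proves only \emph{injectivity} of $\mathscr{L}$ (integrating $\psi\,\mathscr{L}\psi=0$ by parts and expanding in modes $e^{2\pi i nz/L}$), and then obtains \cref{equation:linear-estimate-scalar} abstractly by feeding $\mathscr{L}$ itself into \cref{proposition:linear-estimate-rescaling-argument}. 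You instead run the energy identity with a nonzero right-hand side $g=\mathscr{L}\psi$, absorb the zeroth-order term thanks to the $cL^2$ factor supplied by Poincaré, and obtain the explicit low-order bounds $\|{\nabla\psi}_{L^2}\leq cL\|{g}_{L^2}$ and $\|{\psi}_{L^2}\leq cL^2\|{g}_{L^2}$ before bootstrapping to $L^2_2$ with the $L$-uniform Dirichlet estimate for the fixed operator $\Delta$. This buys two things: a strictly stronger conclusion (the extra positive powers of $L$ on the lower-order norms, which the paper does not record), and it sidesteps a delicacy in the paper's appeal to \cref{proposition:linear-estimate-rescaling-argument}, whose hypotheses concern a single operator fixed across all $L$ --- the coefficient $V''(\overline{\phi})$ of $\mathscr{L}=\Delta-V''(\overline{\phi})$ depends on $L$ and on the particular solution $\phi$, whereas your bootstrap only invokes the rescaling lemma for $\Delta$ and treats $V''(\overline{\phi})\psi$ perturbatively. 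The only input you use beyond \cref{eq:potlbounds} is a two-sided bound on $V''$ over the range of $\overline{\phi}$; this follows from continuity of $V''$ together with \cref{equation:overline-phi-a-priori} and is used implicitly elsewhere in the paper, so it is harmless.
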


\begin{proof}
    \textbf{Injectivity:}
    we first show that there exists $L_0 > 0$ such that $\mathscr{L}$ is injective for $0<L<L_0$.
    Integrating the equation $\mathscr{L}(\psi)=0$ over $X_L$ and integrating by parts, we obtain
\[
\int_{X_L}(\nabla \psi)^2
    -
    V''(\overline{\phi}) \psi^2 
    \, dx
    - 
    \int_{\partial X_L} \psi \partial_n \psi
    \, ds
    =0
\]
where we write $\partial_n$ for the partial derivative in direction of the normal vector field of $\partial B^2(R) \subset \mathbb{R}^2$.
Here, the second integral vanishes, as $\psi=0$ on $\partial X_L$.
 
Thus, using our bounds on the potential~\eqref{eq:potlbounds} we learn that,
\begin{align}
\label{equation:linear-injectivity-integral-equation}
\int_{X_L}  \left(  (\partial_z \psi)^2 -  \psi^2 {c}^2  \right) \, dx
\leq
\int_{X_L}(\nabla \psi)^2
-
V''(\overline{\phi}) \psi^2 
\, dx
\le 0 \; .
\end{align}
where in the first step we added a non-negative term,
and in the second term we used \cref{eq:potlbounds}.
We expand $\psi$ as
\[
\psi
=
\sum_{n \neq 0}
\psi_n \cdot e^{2\pi inz/L},
\]
where $\psi_n:B^2(R) \rightarrow \mathbb{R}$ are coefficient functions independent of the $S^1_L$-direction.
Plugging this into \cref{equation:linear-injectivity-integral-equation} yields
\be
\sum_{n \neq 0} \int_{B^2(R)} \left(  \frac{4 \pi ^2 n^2}{L^2} - {c}^2 \right)  \left| \psi_n(r) \right|^2  \, dr
\le 0 
\ee
which implies that for small $L$, the only solution of the inequality is $\psi_n=0$ for all $n \neq 0$.

\textbf{Injectivity Estimate:}
The operator $\mathscr{L}$ is elliptic.
By \cite[Section 11, Exercises 5-7]{Taylor2011} the Dirichlet problem for an elliptic scalar operator in dimension $\geq 3$ is regular.
Thus, \cref{proposition:linear-estimate-rescaling-argument} gives the injectivity estimate
\[
\|{\psi}_{L^2_2(X_L)}
\leq
c \|{\mathscr{L} \psi}_{L^2(X_L)}.
\]
Here we used that the Dirichlet problem boundary operator vanishes because we assumed that $\psi$ is zero on the boundary.
In order to apply \cref{proposition:linear-estimate-rescaling-argument} we may need to make $L_0$ smaller, but by abuse of notation we denote the new constant by the same symbol.
\end{proof}

The operator $\mathscr{L}$ is self-adjoint and therefore has index zero.
Thus, \cref{proposition:linear-estimate-scalar} immediately implies the existence of an inverse $\mathscr{L}^{-1}: L^2_{0,\text{boundary-}0,\text{mean-}0}(X_L) \rightarrow L^2_{2,\text{boundary-}0,\text{mean-}0}(X_L)$ with operator norm bounded by the constant given in \cref{equation:linear-estimate-scalar}.
However, for our proof we only require a \emph{left-inverse} of $\mathscr{L}$, and we now explain how to obtain this without using self-adjointness of $\mathscr{L}$.
This will be useful in later sections, where the linear operator is not self-adjoint.

\begin{corollary}
    \label{corollary:left-inverse-scalar}
    There exists a constant $c>0$ independent of $L$ such that for small $L$ the following is true:
    there exists a left-inverse $\mathscr{L}^{-1}: L^2_{0,\text{boundary-}0,\text{mean-}0}(X_L) \rightarrow L^2_{2,\text{boundary-}0,\text{mean-}0}(X_L)$ with
    \[
    \|{\mathscr{L}^{-1}} \leq c.
    \]
\end{corollary}

\begin{proof}
As a shorthand, write $V = L^2_{2,\text{boundary-}0,\text{mean-}0}$ and $W = L^2_{0,\text{boundary-}0,\text{mean-}0}$ and $\|{\cdot}_V$ and $\|{\cdot}_W$ for the corresponding norms.
\Cref{proposition:linear-estimate-scalar} gives
$\|{u}_V \le c \|{\mathscr{L}u}_W$ for all $u \in V$.
By \cite[Chapter 3, Section 5.3, Problem 5.15]{Kato1995}, $\text{im}(\mathscr{L})$ is a closed subspace of $W$.

Because $\mathscr{L}$ is injective, it has an inverse $\mathscr{L}_0^{-1}: \text{im}(\mathscr{L}) \rightarrow V$, and
\[
\|{\mathscr{L}_0^{-1}v}_V \le c \|{v}_W \quad \text{for all } v \in \text{im}(\mathscr{L}).
\]
This means the operator norm of $\mathscr{L}_0^{-1}$ is bounded by $c$.
Let $P: W \to \text{im}(\mathscr{L})$ be the $L^2$-orthogonal projection onto the closed subspace $\text{im}(\mathscr{L})$. We define the left-inverse $\mathscr{L}^{-1}: W \to V$ as the composition
\[
\mathscr{L}^{-1} := \mathscr{L}_0^{-1} \circ P.
\]
It is immediate to verify it is a left-inverse of $\mathscr{L}$.
Finally, we have $\|{\mathscr{L}^{-1}} \leq \|{\mathscr{L}^{-1}_0} \cdot \|{P} \leq c$, because $\|{P} \le 1$ for an orthogonal projection.
\end{proof}

\subsection{The nonlinear terms}
\label{subsection:nonlinear-terms}

We now prove an estimate for the non-linear terms of \cref{equation:psi-equation}:

\begin{proposition}
    \label{proposition:nonlinear-estimate}
    If $\overline{\phi} \in L^2_2(X_L)$ and $\psi_1,\psi_2 \in L^2_{2,\text{mean}-0,\text{boundary}-0}(X_L)$ satisfy estimates  
    \cref{equation:L^2_2-a-priori,equation:I(psi)-a-priori,equation:overline-phi-a-priori},
    then
    \begin{align}
        \|{\mathscr{N}\psi_1-\mathscr{N}\psi_2}_{L^2}
        \leq
        c L
        \|{\psi_1-\psi_2}_{L^2_2}.
    \end{align}
\end{proposition}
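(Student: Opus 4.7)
The plan is to decompose $\mathscr{N}\psi_1-\mathscr{N}\psi_2$ into two pieces and bound each using the a priori smallness provided by \cref{equation:L^2_2-a-priori,equation:I(psi)-a-priori,equation:overline-phi-a-priori} together with the mean value theorem applied to $V''$.

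First, writing $f_i := V''(I(\psi_i)+\overline{\phi})-V''(\overline{\phi})$, I would split
\[
\mathscr{N}\psi_1-\mathscr{N}\psi_2
=
(\psi_1-\psi_2)\, f_1 + \psi_2\,(f_1-f_2).
\]
Since $\|I(\psi_i)\|_{C^0}\le cL$ and $\|\overline{\phi}\|_{C^0}\le c$, the arguments of $V''$ all lie in a fixed bounded subset of $\R$ (independent of $L$ for $L$ small). On this bounded subset $V'''$ is bounded (this is where I implicitly use smoothness of $V$), so the mean value theorem yields
\[
\|f_1\|_{C^0}
\le
c\,\|I(\psi_1)\|_{C^0}
\le
cL,
\qquad
\|f_1-f_2\|_{C^0}
\le
c\,\|I(\psi_1-\psi_2)\|_{C^0}.
\]

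Next I would estimate $\|I(\psi_1-\psi_2)\|_{C^0}$ by chaining the Poincaré inequality \cref{proposition:poincare-inequality} (applicable because $I(\psi_1-\psi_2)$ has mean zero in the $z$-direction) with the Sobolev embedding $L^2_2(X_L)\hookrightarrow C^0(X_L)$, which holds because $\dim X_L=3$ and $2-3/2>0$. This gives
\[
\|I(\psi_1-\psi_2)\|_{C^0}
\le
cL\,\|\psi_1-\psi_2\|_{C^0}
\le
cL\,\|\psi_1-\psi_2\|_{L^2_2},
\]
so that $\|f_1-f_2\|_{C^0}\le cL\,\|\psi_1-\psi_2\|_{L^2_2}$.

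Finally I would combine the pieces with Hölder. For the first summand,
\[
\|(\psi_1-\psi_2)\,f_1\|_{L^2}
\le
\|f_1\|_{C^0}\,\|\psi_1-\psi_2\|_{L^2}
\le
cL\,\|\psi_1-\psi_2\|_{L^2_2}.
\]
For the second, using $\|\psi_2\|_{L^2}\le\|\psi_2\|_{L^2_2}\le cL^{1/2}$,
\[
\|\psi_2\,(f_1-f_2)\|_{L^2}
\le
\|\psi_2\|_{L^2}\,\|f_1-f_2\|_{C^0}
\le
cL^{1/2}\cdot cL\,\|\psi_1-\psi_2\|_{L^2_2}
\le
cL\,\|\psi_1-\psi_2\|_{L^2_2},
\]
for $L$ small. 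Adding the two bounds yields the claim. The only subtle point is the Lipschitz bound on $V''$ on bounded sets, which needs either a smoothness hypothesis on $V$ beyond \cref{eq:potlbounds} or an additional argument; everything else is a straightforward combination of the Poincaré inequality, Sobolev embedding in three dimensions, and the a priori estimates from \cref{proposition:a-priori-estimates}.
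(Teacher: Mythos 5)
Your argument is essentially the paper's: the same splitting of $\mathscr{N}\psi_1-\mathscr{N}\psi_2$ into a term carrying $\psi_1-\psi_2$ and a term carrying the difference of the nonlinear factors, followed by the Poincar\'e inequality for mean-zero functions, the Sobolev embedding, and the a priori bounds. The only packaging difference is that you control the difference of the $V''$-factors by the mean value theorem (so you need $V''$ locally Lipschitz, essentially $V\in C^3$), whereas the paper assumes $V$ is a polynomial and expands $\mathscr{N}$ into monomials $\psi\, I(\psi)^s\,\overline{\phi}^{\,t}$, estimating each one. Both routes require a hypothesis on $V$ beyond \cref{eq:potlbounds}, and you correctly flag this; your version is marginally more general and avoids the bookkeeping of the polynomial expansion.

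One quantitative slip: the Sobolev embedding $L^2_2(X_L)\hookrightarrow C^0(X_L)$ does \emph{not} hold with a constant independent of $L$, because the circle factor collapses as $L\to 0$; the correct scaling, which the paper uses throughout, is $\|{f}_{C^0}\le cL^{-1/2}\|{f}_{L^2_2}$, as one already sees for $z$-independent functions. Your chain therefore gives $\|{I(\psi_1-\psi_2)}_{C^0}\le cL\cdot cL^{-1/2}\|{\psi_1-\psi_2}_{L^2_2}=cL^{1/2}\|{\psi_1-\psi_2}_{L^2_2}$, not $cL\|{\psi_1-\psi_2}_{L^2_2}$. This happens to be harmless: the first summand never invokes the embedding on $\psi_1-\psi_2$, and the second still closes because $\|{\psi_2(f_1-f_2)}_{L^2}\le\|{\psi_2}_{L^2}\|{f_1-f_2}_{C^0}\le cL^{1/2}\cdot cL^{1/2}\|{\psi_1-\psi_2}_{L^2_2}=cL\|{\psi_1-\psi_2}_{L^2_2}$, which is exactly the claimed rate with no margin to spare. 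The structure of the proof is sound, but the $L$-dependence of the embedding constant must be tracked.
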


\begin{proof}
    For simplicity we assume that $V$ from \cref{eq:potlbounds} is a polynomial, though this assumption is not essential and can be removed.
    The expression $\mathscr{N}\psi$ is a large sum of terms of the form $\psi I(\psi)^s\overline{\phi}^{t}$ for some $s\geq 1$, $t \geq 0$.
    We will estimate one of these summands, and the result for $\mathscr{N} \psi$ follows analogously.
    As a shorthand, we will write $I_1:=I(\psi_1)$ and $I_2:=I(\psi_2)$.
    We have
    \begin{align}
        \|{\psi_1 I^s_1 \bar\phi^t-\psi_2 I^s_2 \bar\phi^t}_{L^2}
        \leq 
        \underbrace{\|{(\psi_1 -\psi_2) I^s_2 \bar\phi^t}_{L^2}}_{=:I}
        +
        \underbrace{\|{(I^s_1 -I^s_2) \psi_1 \bar\phi^t}_{L^2}}_{=:II}
    \end{align}
    where we find for the first summand
    \begin{align*}
        I
        &\leq
        \|{(\psi_1 -\psi_2)}_{L^2} \|{I^s_2}_{C^0} \|{\bar\phi^t}_{C^0} \leq  L^{s+2} \|{(\psi_1 -\psi_2)}_{L^2_2}  \|{\psi_2}^s_{L^2_2}  L^{-s/2}
        \leq  L^{s/2+2} \|{(\psi_1 -\psi_2)}_{L^2_2}.
    \end{align*}
    In the second step we estimated $||I_2||_{C^0} \leq L ||\psi_2||_{C^0}$ and $\|{(\psi_1 -\psi_2)}_{L^2} \leq L^2 \|{(\psi_1 -\psi_2)}_{L^2_2}$ by the Poincaré inequality for functions which have a zero on $\{r\} \times S^1_L$ for each $r \in B^2(R)$ and $||\psi_2||_{C^0} \leq cL^{-1/2}||\psi_2||_{L^2_2}$ by the Sobolev embedding theorem.
    For the second summand:
    \begin{align*}
        II
        &\leq
        c L^{1/2}\|{(I^s_1 -I^s_2)}_{C^0} \leq c L \|{(\psi_1 -\psi_2)}_{L^2_2}
        \left(
        \|{I^{s-1}_1}_{C^0} + \|{I^{s-2}_1 I_2}_{C^0}+\dots+\|{I^{s-1}_2}_{C^0} 
        \right)
        \leq c L \|{(\psi_1 -\psi_2)}_{L^2_2},
    \end{align*}
    where in the second step we again used the Poincaré inequality and the Sobolev embedding theorem.
\end{proof}

\subsection{An application of the contraction mapping theorem}
\label{subsection:scalar-contraction-mapping-theorem}

In this section, we conclude the proof of \cref{theorem:main-uniqueness}.
So far, we have shown two things:
(1) we showed in \cref{subsection:a-priori-estimate} that any solution $\psi$ to the non-linear equation \cref{equation:psi-equation} must satisfy a certain smallness estimate;
(2) taking together the results from \cref{subsection:linear-operator,subsection:nonlinear-terms}, we have that the map $\mathscr{L}^{-1}p\mathscr{N}$ is a contraction on a small ball around $0$, where $p$ is the $L^2$-orthogonal projection onto mean zero functions.
Fixed points correspond to solutions of \cref{equation:psi-equation}.
We already know that $\psi=0$ solves the equation, so taking everything together we will find that $\psi=0$ is the \emph{unique} solution, proving the claim.

\begin{proof}[Proof of \cref{theorem:main-uniqueness}]
    Assume that $\xi \in L^2_{2,\text{boundary}-0}(X_L)$ satisfies \cref{equation:phi-equation}.
    Then $\xi$ is smooth by elliptic regularity, and in particular $\mu := \partial_z \xi$ is an element in $L^2_{2,\text{boundary}-0,\text{mean}-0}(X_L)$.
    Then, by \cref{proposition:a-priori-estimates} we have
    \begin{align}
        \label{equation:phi-bar-C-0-estimate-repeated}
        \|{\overline{\xi}}_{C^0}
        &\leq
        c,
    \end{align}
    where $\overline{\xi}$ denotes the average of $\xi$ in the $z$-direction, just as in the paragraph above \cref{equation:psi-equation}.
    For $0<K \in \mathbb{R}$ (to be fixed later in the proof) define
    \begin{align}
        B_K
        :=
        \{
        f \in L^2_{2,\text{boundary}-0,\text{mean}-0}(X_L) : \|{f}_{L^2_2} \leq KL^{1/2}
        \}.
    \end{align}
    Write $p:L^2(X_L) \rightarrow L^2_{\text{mean-}0}(X_L)$ for the $L^2$-orthogonal projection, and $\mathscr{L}^{-1}$ for the left-inverse of $\mathscr{L}$ from \cref{corollary:left-inverse-scalar}.
    Then:
    \begin{enumerate}
        \item 
        The map $-\mathscr{L}^{-1} \circ p \circ \mathscr{N}: L^2_{2,\text{mean-}0}(X_L) \rightarrow L^2_{2,\text{mean-}0}(X_L)$ maps $B_K$ into $B_K$ for $L$ small enough.

        We check $\|{\mathscr{L}^{-1}p\mathscr{N}(f)}_{L^2_2} \leq KL^{1/2}$ for $f \in B_K$.
        This follows because, for $f \in B_K$,
        \begin{align*}
            \|{\mathscr{L}^{-1}p\mathscr{N}(f)}_{L^2_2}
            \leq
            c
            \|{p\mathscr{N}(f)}_{L^2}
            \leq
            c
            \|{\mathscr{N}(f)}_{L^2}
            \leq
            cL
            \|{f}_{L^2_2}
            \leq
            cKL^{3/2},
        \end{align*}
        which is less than $KL^{1/2}$ for $L$ small enough.
        In the first step we used \cref{corollary:left-inverse-scalar};
        in the second step we used that the operator norm of an orthogonal projection is at most $1$;        
        in the third step we used \cref{proposition:nonlinear-estimate}, applied to $\psi_1=f$ and $\psi_2=0$.
        Note that we could apply this Proposition, because of \cref{equation:phi-bar-C-0-estimate-repeated} and because by \cref{proposition:linear-estimate-scalar}, we get \cref{equation:I(psi)-a-priori} as in the proof of \cref{proposition:a-priori-estimates}.
        Thus, for $L$ small enough, $\|{\mathscr{L}^{-1} p \mathscr{N}(f)}_{L^2_2} \leq KL^{1/2}$, and therefore $-\mathscr{L}^{-1} p \mathscr{N}(f) \in B_K$.

        \item 
        The map $-\mathscr{L}^{-1} p \mathscr{N}: B_K \rightarrow B_K$ is a contraction for $L$ small enough.

        After checking like in the previous point that \cref{proposition:nonlinear-estimate} can be applied, we obtain for $f_1,f_2 \in B_K$:
        \begin{align*}
            \|{\mathscr{L}^{-1} p \mathscr{N}(f_1)-\mathscr{L}^{-1} p \mathscr{N}(f_2)}_{L^2_2}
            &\leq
            c
            \|{p \mathscr{N}(f_1)-p \mathscr{N}(f_2)}_{L^2}
            \\
            &\leq
            c
            \|{\mathscr{N}(f_1)-\mathscr{N}(f_2)}_{L^2}
            \\
            &\leq
            cL \|{f_1-f_2}_{L^2_2},
        \end{align*}
        and the factor $cL$ is smaller than $1$ for $L$ small enough, thereby making $\mathscr{NL}^{-1}$ a contraction map.

        \item 
        By the Banach fixed point theorem, $B_K$ contains a unique fixed point of the map $-\mathscr{L}^{-1} p \mathscr{N}$.
        The function $0$ is a fixed point of $-\mathscr{L}^{-1} p \mathscr{N}$, so it must be the unique fixed point.

        \item 
        By assumption, $0=\mathscr{L}(\mu)+\mathscr{N}(\mu)$.
        Applying $p$ to both sides and using that $\mathscr{L}$ maps mean zero functions to mean zero functions, we obtain:
        $0=\mathscr{L}(\mu)+p\mathscr{N}(\mu)$.
        Applying the left-inverse $\mathscr{L}^{-1}$ to both sides, we get
        $\mu=-\mathscr{L}^{-1} p \mathscr{N}(\mu)$, i.e. $\mu$ is a fixed point of $-\mathscr{L}^{-1} p \mathscr{N}$.

        We have that
        \begin{align*}
            \|{\mu}_{L^2_2}
            \leq
            c L^{1/2},
        \end{align*}
        by \cref{proposition:a-priori-estimates}.
        Thus we can choose $K>0$ big enough so that $\mu \in B_K$.
        Now we know that $\mu \in B_K$, and that $\mu$ is a fixed point of $-\mathscr{L}^{-1} p \mathscr{N}$.
        By the previous point 3 we have that the unique fixed point in $B_K$ is $0$, we altogether get that $\mu=0$.
        This proves the claim.        
        \qedhere
    \end{enumerate}
\end{proof}

\section{Metric analysis}
\label{section:metric-analysis}

In this section we apply the same techniques from before to the more difficult problem of Ricci-flat metrics on the space $S^1_\beta \times B^3(R) \times S^1_L$.
As explained in the introduction, Ricci-flat metrics with certain symmetries on this space can be reduced to the study of two real-valued functions $A,K: X_L \rightarrow \R$ defined on the space $X_L=B^3(R) \times S^1_L$.
We will prove the following uniqueness theorem for these functions:

\begin{theorem}
    \label{theorem:metric-main-uniqueness}
    There exists $L_0>0$ such that 
    for all $0<L<L_0$, the following is true:
    if SO(3) invariant functions $A,K \in L^2_3(X_L)$ solve the equations \cref{equation:A-K-eqn-without-derivative} and satisfy
    \[
        \|{A}_{L^2_3}
        \leq
        c L^{1/2},
        \|{K}_{L^2_3}
        \leq
        c L^{1/2},
    \]
    with $A=0$, $K=0$ on $\partial X_L$, then $A=0, K=0$ everywhere.
\end{theorem}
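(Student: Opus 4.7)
The plan is to mimic the contraction-mapping proof of \cref{theorem:main-uniqueness}, now applied to the pair $(\alpha,\kappa) := (\partial_z A,\partial_z K)$, which lies in a mean-zero, zero-boundary subspace of $L^2_2(X_L)$. Differentiating \cref{equation:A-K-eqn-without-derivative} in $z$ and separating linear and nonlinear parts produces a system
\[
\mathscr{L}(\alpha,\kappa) + \mathscr{N}(\alpha,\kappa) = 0,
\]
where $\mathscr{L}$ is the linearisation at the Black String $(A,K)=(0,0)$ and $\mathscr{N}$ is an integro-differential remainder whose non-linear dependence is on $\alpha,\kappa$, their $z$-antiderivatives $I(\alpha)=\mathring{A}$ and $I(\kappa)=\mathring{K}$, and the $z$-averages $\overline{A},\overline{K}$. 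A direct computation gives
\[
\mathscr{L}_1(\alpha,\kappa) = f(r)\nabla^2\kappa - 2\alpha - 2\kappa,\qquad \mathscr{L}_2(\alpha,\kappa) = f(r)^2(\alpha_{zz}+4\kappa_{zz}) + \tfrac{\alpha_r+r\alpha_{rr}}{4r}.
\]
The goal is to show that $0$ is the unique small fixed point of $-\mathscr{N}\circ\mathscr{L}^{-1}$, forcing $(\alpha,\kappa)=0$ and hence $A,K$ to be $z$-independent; a separate maximum-principle argument then upgrades this to $A=K=0$.

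\textbf{Linear and non-linear estimates.} For $\mathscr{L}$: expand $\alpha,\kappa$ in Fourier modes $e^{2\pi inz/L}$ with $n\neq 0$ and test the resulting coupled ODE system against $\alpha_n,\kappa_n$ with appropriate radial weights; for $L$ sufficiently small the $(2\pi n/L)^2$ terms dominate the lower-order coupling, giving injectivity of $\mathscr{L}$. After verifying that the Dirichlet problem for the $2\times 2$ system is regular in the sense of Agmon--Douglis--Nirenberg, \cref{proposition:linear-estimate-rescaling-argument} promotes this to an $L$-independent injectivity estimate $\|{(\alpha,\kappa)}_{L^2_2} \leq c\,\|{\mathscr{L}(\alpha,\kappa)}_{L^2}$. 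For $\mathscr{N}$: every summand contains at least one factor of $\alpha$, $\kappa$, $I(\alpha)$, or $I(\kappa)$; by \cref{proposition:poincare-inequality} each $I$-factor costs a power of $L$, while $\alpha,\kappa$ are controlled in $L^\infty$ via Sobolev embedding in dimension $4$, and exactly as in \cref{proposition:nonlinear-estimate} one obtains
\[
\|{\mathscr{N}(\alpha_1,\kappa_1)-\mathscr{N}(\alpha_2,\kappa_2)}_{L^2} \leq cL\bigl(\|{\alpha_1-\alpha_2}_{L^2_2} + \|{\kappa_1-\kappa_2}_{L^2_2}\bigr)
\]
on the relevant small ball.

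\textbf{Closing the fixed-point argument and upgrade to $A=K=0$.} The hypothesised smallness $\|{A}_{L^2_3}, \|{K}_{L^2_3} \leq cL^{1/2}$ immediately gives $\|{\alpha}_{L^2_2}, \|{\kappa}_{L^2_2} \leq cL^{1/2}$, and combining with \cref{proposition:poincare-inequality} and the Sobolev embedding $L^2_3 \hookrightarrow C^1$ produces the a priori $C^0$-bounds on $I(\alpha),I(\kappa),\overline{A},\overline{K}$ needed to apply the non-linear estimate. The argument of \cref{subsection:scalar-contraction-mapping-theorem} then runs verbatim: $-\mathscr{N}\circ\mathscr{L}^{-1}$ is a contraction on the ball of mean-zero pairs of radius $\sim L^{1/2}$ in $L^2$, whose unique fixed point is $0$, and since $\mathscr{L}(\alpha,\kappa)$ lies in this ball by the a priori bounds, $(\alpha,\kappa)=0$. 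Thus $A,K$ depend only on $r$; the second equation of \cref{equation:A-K-eqn-without-derivative} reduces to $(rA_r)_r=0$, and together with regularity at $r=0$ and the Dirichlet condition at $r=R$ this forces $A\equiv 0$. The first equation then becomes $f(r)\,e^{2K}\Delta K = 1-e^{-2K}$ with $K|_{\partial B^3(R)}=0$; since $1-e^{-2K}$ has the same sign as $K$, the weak maximum principle gives $K\equiv 0$.

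\textbf{Main obstacle.} The technical heart is the injectivity estimate for $\mathscr{L}$. The system is mildly asymmetric: no spatial Laplacian of $\alpha$ appears in $\mathscr{L}_1$ and only a degenerate radial derivative of $\alpha$ appears in $\mathscr{L}_2$, so the Dirichlet problem for $(\mathscr{L}_1,\mathscr{L}_2)$ has to be treated in the Douglis--Nirenberg framework rather than as a standard elliptic system. A convenient workaround is to eliminate $\alpha = \tfrac{f}{2}\nabla^2\kappa - \kappa$ from $\mathscr{L}_1$ and substitute into $\mathscr{L}_2$, producing a single fourth-order scalar equation for $\kappa$ whose principal symbol on SO(3)-invariant functions is manifestly elliptic; a scalar injectivity argument in the spirit of \cref{proposition:linear-estimate-scalar} then applies, and $\alpha$ is recovered algebraically from $\kappa$ modulo lower-order terms. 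Ensuring that all constants obtained this way remain independent of $L$ via \cref{proposition:linear-estimate-rescaling-argument} is the other main bookkeeping challenge.
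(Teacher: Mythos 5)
Your proposal follows essentially the same route as the paper: differentiate the system in $z$, eliminate $u=A_z$ from the first linearised equation to reduce the injectivity question to a fourth-order scalar equation for $K_z$ (this is exactly what the paper's proof of \cref{proposition:metric-linear-estimate} does — it is the argument, not merely a ``workaround''), prove the non-linear Lipschitz estimate by extracting a factor of $L$ from the Poincaré inequality on mean-zero functions, and close with the Banach fixed point theorem on a ball of radius $\sim L^{1/2}$. The one genuine divergence is the final step: where the paper disposes of $z$-independent solutions by comparison with the explicit homogeneous solution of \cite{Albertini2024} (\cref{proposition:metric-independent-of-z-trivial}), you solve $(rA_r)_r=0$ directly and then apply a maximum principle to $f(r)e^{2K}\nabla^2K=1-e^{-2K}$, using that $1-e^{-2K}$ has the sign of $K$; this is a clean, self-contained alternative that avoids the external reference. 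One small caution: your appeal to ``Sobolev embedding in dimension $4$'' for $L^\infty$-control of $\alpha,\kappa$ is the borderline case where $L^2_2\hookrightarrow C^0$ fails in general — it is rescued here only because SO(3)-invariance reduces everything to functions of $(r,z)$, and the $L$-dependence of the embedding constant ($\|f\|_{C^0}\leq cL^{-1/2}\|f\|_{L^2_2}$) must be tracked, as the paper does, for the powers of $L$ in the contraction estimate to come out right.
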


Throughout, we will use $z$ as a coordinate on $S^1_L$ and $r=|x|$ as a polar coordinate on the $B^3(R)$-factor of $X_L$.

We will prove this theorem in two steps:
first, we will study the $z$-derivative of the system \cref{equation:A-K-eqn-without-derivative} and show that under the smallness assumption of \cref{theorem:metric-main-uniqueness}, its only solution is zero.
This step is analogous to \cref{section:scalar-field-analysis} in that we we will again prove a linear estimate, non-linear estimate, and then apply a fixed point theorem.
Second, we observe that a $z$-independent solution of \cref{equation:A-K-eqn-without-derivative} must be zero everywhere.
This is an immediate consequence of \cite{Albertini2024}, and is stated as \cref{proposition:metric-independent-of-z-trivial}.

We begin by taking the $z$-derivative of \cref{equation:A-K-eqn-without-derivative}.
Defining $\psi=K_z, u=A_z$ we obtain:
\begin{equation}
\label{equation:K-eqn}
\left(12 \psi \psi^{(0,1)} + \psi^{(0,2)}+16  \psi^3\right) f(r)^2 e^{4 K}-2  f(r)(\psi+u) e^{2 (K+A)}+\frac{ \left(\psi^{(1,0)}+r \psi^{(2,0)}\right)}{4 r}  =0
\end{equation}
while the A equation reads as
\begin{equation}
\label{equation:A-eqn}
\left(4 \left(\psi \left(7 \psi^{(0,1)}+u^{(0,1)}\right)+\psi^{(0,2)}+6 \psi^3\right)+u^{(0,2)}\right) e^{4 K}f(r)^2+\frac{u^{(1,0)}+r u^{(2,0)}}{4 r}=0.
\end{equation}

Here, $(\cdot)^{(k,l)}$ denotes
$
\underbrace{
\partial_r \dots \partial_r
}_{k \text{ times}}
\underbrace{
\partial_z \dots \partial_z
}_{l \text{ times}}$.

The linearisation of the system of equations \cref{equation:A-eqn,equation:K-eqn} at $\psi=0$, $u=0$ is the linear differential operator of second order
\begin{align}
    \label{equation:metric-linearisation}
    \begin{split}
    \mathscr{L}
    :
    L^2_{2,\text{boundary-}0,\text{mean-}0}(\underline{\R^2}(X_L))
    &\rightarrow
    {L^2_{\text{mean-}0}(\underline{\R^2}(X_L))}
    \\
    \begin{pmatrix}
        \psi \\ u
    \end{pmatrix}
    &\mapsto
    \begin{pmatrix}
    \tilde \nabla^2-\frac{2e^{2\bar K}}{\left(r^2-1\right)^2} & -2\frac{e^{2\bar A}}{\left(r^2-1\right)^2} \\
    4e^{4\bar K}\frac{\partial_{zz}}{\left(r^2-1\right)^4} & \tilde \nabla^2 
    \end{pmatrix}
    \begin{pmatrix}
        \psi \\ u
    \end{pmatrix},
    \end{split}
\end{align}
where $\tilde \nabla^2= \frac{ \partial_r+ r\partial_{rr}}{4 r}+\frac{e^{4 \bar K}\partial_{zz}}{\left(r^2-1\right)^4}$.
In the following, we write $\mathscr{N}$ for the nonlinear part of the system of equations \cref{equation:A-eqn,equation:K-eqn}.

\subsection{The linear terms}

For the linear operator from \cref{equation:metric-linearisation} we have the following:

\begin{proposition}
    \label{proposition:metric-linear-estimate}
    There exists a constant $c>0$ independent of $L$ such that for small $L$ and all $\begin{pmatrix}\psi \\ u \end{pmatrix} \in L^2_{2,\text{boundary-}0,\text{mean-}0}(\underline{\R^2}(X_L))$ we have that
    \begin{align}
        \label{equation:linear-estimate}
        \|{\begin{pmatrix}\psi \\ u \end{pmatrix}}_{L^2_2}
        \leq
        c
        \|{\mathscr{L} \begin{pmatrix}\psi \\ u \end{pmatrix}}_{L^2}.
    \end{align}
\end{proposition}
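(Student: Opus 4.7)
The plan is to follow the two-step strategy used in the proof of \cref{proposition:linear-estimate-scalar}: first I would establish that $\mathscr{L}$ is injective on the mean-zero, boundary-zero space for all sufficiently small $L$, and then invoke \cref{proposition:linear-estimate-rescaling-argument} to upgrade this to the quantitative estimate \cref{equation:linear-estimate} with a constant independent of $L$. The essential new feature compared to the scalar case is that $\mathscr{L}$ is a $2 \times 2$ system whose off-diagonal coupling is asymmetric: the $A$-equation couples to the second-order term $\partial_{zz}\psi$, while the $K$-equation couples only to the zeroth-order term $u$, so simply testing each equation against its own unknown does not immediately produce a coercive quadratic form.

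To obtain injectivity, I would assume $\mathscr{L}\begin{pmatrix}\psi\\ u\end{pmatrix} = 0$ and take a weighted pairing of the two rows of $\mathscr{L}$,
\[
    \lambda \int_{X_L} \psi\, \mathscr{L}_1(\psi, u)\, dV + \int_{X_L} u\, \mathscr{L}_2(\psi, u)\, dV = 0,
\]
for a constant $\lambda > 0$ to be chosen and $dV$ the natural volume form on $X_L$. Integration by parts on the $\tilde\nabla^2$ pieces, together with the boundary vanishing of $\psi$ and $u$, produces non-positive radial gradient contributions plus the $z$-derivative contribution
\[
    -\int_{X_L} \frac{e^{4\bar K}}{(r^2 - 1)^4}\bigl(\lambda\, \psi_z^2 + u_z^2 + 4\, u_z\psi_z\bigr)\, dV,
\]
where the cross term arises from integrating the $A$-row off-diagonal coupling by parts in $z$. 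The quadratic form $\lambda x^2 + y^2 + 4xy$ is positive definite precisely when $\lambda > 4$; fixing any such $\lambda$ and applying \cref{proposition:poincare-inequality} pointwise in $r$ to the mean-zero functions $\psi$ and $u$, this contribution dominates as $\frac{c}{L^2}\bigl(\|{\psi}_{L^2}^2 + \|{u}_{L^2}^2\bigr)$, using that $1/(r^2-1)^4 \geq 1/(r^2-1)^2$ on $[0,1]$. The remaining contributions are $O(1)$ as $L \to 0$ (the zeroth-order diagonal term with coefficient $(r^2-1)^{-2}$, and the $K$-row off-diagonal $u$-coupling bounded via Cauchy--Schwarz), so for $L$ small enough they are absorbed, forcing $\psi = u = 0$.

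For the injectivity estimate itself, $\mathscr{L}$ is an elliptic system: its principal symbol is lower triangular with both diagonal entries equal to $\sigma(\tilde\nabla^2)$, hence invertible whenever $\xi \neq 0$. The Dirichlet problem for this system is regular as in the scalar case by \cite[Section 11, Exercises 5--7]{Taylor2011}, and since the coefficients of $\mathscr{L}$ are $z$-independent, \cref{proposition:linear-estimate-rescaling-argument} then yields \cref{equation:linear-estimate} with a constant independent of $L$, after possibly shrinking $L_0$ once more. The main obstacle I expect is twofold: first, identifying the right weighting constant $\lambda$ and volume form so that the $z$-derivative contributions from the asymmetric off-diagonal coupling assemble into a positive-definite quadratic form (which is what the $\lambda > 4$ threshold reflects); and second, controlling the zeroth-order terms with singular coefficients $(r^2-1)^{-2}$ near the cavity wall $r = 1$ against the coercive $1/L^2$-term without losing the small-$L$ advantage that drives the whole argument.
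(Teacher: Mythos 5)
Your proposal is correct, and its second half (lower-triangular principal symbol with determinant $\sigma(\tilde\nabla^2)^2$, regularity of the Dirichlet problem, then \cref{proposition:linear-estimate-rescaling-argument}) matches the paper's; for the regularity of the Dirichlet problem for a \emph{system} the paper invokes \cite[Proposition 11.10]{Taylor2011} (strong ellipticity) rather than the scalar exercises you cite, but that is cosmetic. The injectivity step, however, is genuinely different. The paper does not test the system against anything: it solves the first row algebraically for $u$, substitutes into the second row to get a single fourth-order equation for $\psi$, takes one further $z$-derivative (justified by elliptic regularity for elements of the kernel), multiplies by $\psi_z$, integrates by parts, and expands in Fourier modes $e^{2\pi i n z/L}$; injectivity then follows because a coefficient of order $(2\pi n/L)^6$ beats one of order $(2\pi n/L)^4$ for small $L$. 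Your weighted pairing $\lambda\int\psi\,\mathscr{L}_1+\int u\,\mathscr{L}_2$ with $\lambda>4$ extracts the same $1/L^2$ gain directly at the level of the second-order system via \cref{proposition:poincare-inequality}, without eliminating $u$, without differentiating the equation again, and at the natural regularity $L^2_2$; the computation of the cross term is right (after integrating the $(2,1)$ entry by parts in $z$ one gets $-4\int G\,u_z\psi_z$, and either sign of the cross term gives the same threshold $\lambda>4$). Your route also treats the degenerating coefficients more carefully: the pointwise comparison $(r^2-1)^{-4}\geq(r^2-1)^{-2}$ together with the boundedness of $\bar K,\bar A$ replaces the paper's extraction of $F_{\min}$ and $G_{\max}$ for the weights $e^{-2\bar A}(1-r^2)^2$ and $e^{4\bar K}(r^2-1)^{-4}$, which degenerate to $0$ and $\infty$ at the cavity wall $r=1$ --- precisely the obstacle you flag at the end. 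In short: same skeleton, but a different and arguably more robust mechanism for the injectivity.
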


\begin{proof}
\textbf{Injectivity:}
Solving the first line of $\mathscr{L} \begin{pmatrix}\psi \\ u \end{pmatrix}=\begin{pmatrix}0\\0\end{pmatrix}$ for $u$ gives

\begin{align*}
 u= e^{-2\bar A}\left(-e^{2\bar K} +\left(r^2-1\right)^2 \frac{ \tilde \nabla^2 \psi}{2}\right).  
\end{align*}
Plugging this into the second line gives:
\be
\label{ueq}
 \tilde \nabla^2 u= \tilde \nabla^2\left( e^{-2\bar A}\left(-e^{2\bar K} + \left(r^2-1\right)^2\frac{ \tilde \nabla^2 \psi}{2}\right)\right)=-4e^{4\bar K}\frac{\partial_{zz}\psi}{\left(r^2-1\right)^4}
\ee

Because $\begin{pmatrix} \psi \\ u \end{pmatrix}$ is in the kernel of the elliptic operator $\mathscr{L}$, we have that $\psi$ and $u$ are smooth by elliptic regularity.
Thus, we can take the $z$-derivative of the previous equation.
Multiplying the result by $\psi_z$ gives:
\begin{align*}
\psi_z \tilde \nabla^2 \left( e^{-2\bar A }(1-r^2)^2\frac{ \tilde \nabla^2 \psi_z}{2} \right)+4e^{4\bar K}\frac{\psi_{zzz}}{\left(r^2-1\right)^4}\psi_z=0.
\end{align*}
Integrating w.r.t. to $r$ and $z$, and integrating by parts gives the following:

\begin{align*}
\int dr \, dz\,r\,  \left(\underbrace{e^{-2\bar A }(1-r^2)^2}_{F}\frac{(  \tilde\nabla^2 \psi_z)^2 }{2}-4\underbrace{\frac{e^{4\bar K }}{\left(r^2-1\right)^4}}_{G}\psi_{zz}^2\right)=0.
\end{align*}

Here we used the fact that $\psi$ has zero boundary data, so the boundary terms appearing due to integration by parts vanish.
Now pulling out the minimum of $F$ and the maximum of $G$ yields:
\begin{align*}
0
&
\geq
F_{min}\int dr \, dz \, r\,\,(  \tilde\nabla^2 \psi_z)^2- 4 G_{max}\int dr \, dz \, r\,\,\psi_{zz}^2 \\
&\geq 
F_{min}\int dr \, dz \, r\left(\,(  \tilde\nabla^2_R \psi_z)^2+\frac{1}{4r}\partial_r(\psi_{zr}r) \psi_{zzz}\underbrace{\frac{e^{4\bar K }}{\left(r^2-1\right)^4}}_G+\left(\psi_{zzz}\underbrace{\frac{e^{4\bar K }}{\left(r^2-1\right)^4}}_G\right)^2\right)- 4 G_{max}\int dr \, dz \, r\,\,\psi_{zz}^2 \\
&\geq
F_{min}\int dr \, dz \, r\left(\, \frac{1}{4r}\partial_r(\psi_{zr}r) \psi_{zzz}\underbrace{\frac{e^{4\bar K }}{\left(r^2-1\right)^4}}_G+\left(\psi_{zzz}\underbrace{\frac{e^{4\bar K }}{\left(r^2-1\right)^4}}_G\right)^2\right)- 4 G_{max}\int dr \, dz \, r\,\,\psi_{zz}^2 
\end{align*}
where we used the definition of $\tilde\nabla^2$ and the new notation $\tilde\nabla^2_R= \frac{ \partial_r+ r\partial_{rr}}{4 r}$ in the first step,
and we dropped the term with $r$-derivatives since it is positive in the second step.
Pulling out the minimum of $G$ and integrating by parts yields
\be
0
\geq
F_{min} G_{min} \int dr \, dz \, r\, \frac{1}{4}\psi_{zzr}^2+F_{min} G_{min}^2 \int dr \,dz \,r\,\psi_{zzz}^2 - 4 G_{max}\int dr \, dz \, r\,\,\psi_{zz}^2 .
\ee
Dropping once again the positive terms, we are left with 
\be
0>F_{min} G_{min}^2 \int dr \,dz \,r\,\psi_{zzz}^2 - 4 G_{max}\int dr \, dz \, r\,\,\psi_{zz}^2 .
\ee
Writing $\psi=\sum_{n\neq0} \psi_n(r) e^{i2\pi nz/L}$ and integrating over $z$ we obtain
\begin{align*}
0
&\geq 
F_{min} G_{min}^2 \sum_{n\neq0}\int dr \, r \,\left(\frac{2 \pi \, n }{L}\right)^6 |\psi_n(r)|^2
-
4 G_{max} \sum_{n\neq0} \int dr \, r \,\left(\frac{2 \pi \, n}{L}\right)^4|\psi_n(r)|^2
\\
&\geq
\left(
\sum_{n \neq 0}
n^4
\int dr \, r |\psi_n(r)|^2
\right)
\left(
F_{min} G_{min}^2
\left(\frac{2 \pi}{L}\right)^6
-
4 G_{max} \left(\frac{2 \pi}{L}\right)^4
\right),
\end{align*}
where in the second step we estimated $n^6\geq n^4$.
The first factor is positive, and the second factor is guaranteed to be positive for small enough $L$.
Thus, we have that $\psi$ must be constant zero for small enough $L$.
Then $u=0$ follows at once from the first line of $\mathscr{L} \begin{pmatrix}\psi \\ u \end{pmatrix}=\begin{pmatrix}0\\0\end{pmatrix}$.

\textbf{Injectivity estimate:}
The operator $\mathscr{L}$ is strongly elliptic.
By \cite[Proposition 11.10]{Taylor2011} the Dirichlet problem for $\mathscr{L}$ is regular.
As in the proof of \cref{proposition:linear-estimate-scalar}, the result from \cref{proposition:linear-estimate-rescaling-argument} gives the claimed injectivity estimate.
\end{proof}

Just as in \cref{section:scalar-field-analysis} we obtain a bounded left inverse:

\begin{corollary}
    \label{corollary:left-inverse-metric}
    There exists a constant $c>0$ independent of $L$ such that for small $L$ the following is true:
    there exists a left-inverse $\mathscr{L}^{-1}: L^2_{0,\text{boundary-}0,\text{mean-}0}(\underline{\mathbb{R}^2}X_L) \rightarrow L^2_{2,\text{boundary-}0,\text{mean-}0}(\underline{\mathbb{R}^2}(X_L))$ with
    \[
    \|{\mathscr{L}^{-1}} \leq c.
    \]
\end{corollary}

The proof is analogous, so we omit it here.

\subsection{The nonlinear terms}

In this section, we will prove an estimate for the nonlinear part $\mathscr{N}$ of the system of equations \cref{equation:A-eqn,equation:K-eqn}.
The precise statement is:

\begin{proposition}
    \label{proposition:metric-nonlinear-estimate}
    There exist $L_0>0$ and $c>0$ such that the following is true:
    for all $0<L<L_0$ and $\begin{pmatrix}\overline{K} \\ \overline{A} \end{pmatrix} \in L^2_{2,\text{boundary-}0,\text{mean-}0}(\underline{\R^2}(X_L))$ and $\begin{pmatrix}\psi_i \\ u_i \end{pmatrix} \in L^2_{2,\text{boundary-}0,\text{mean-}0}(\underline{\R^2}(X_L))$ for $i \in \{1,2\}$ satisfying
    \begin{align}
    \label{equation:nonlinear-smallness-assumptions}
        \begin{split}
        \|{u_1}_{L_2^2}
        \leq cL^{1/2},
        \quad
        \|{u_2}_{L_2^2}
        \leq cL^{1/2},
        \\
        \|{\psi_1}_{L_2^2}
        \leq cL^{1/2},
        \quad
        \|{\psi_2}_{L_2^2}
        \leq cL^{1/2},
        \\
        \|{\overline{K}}_{C^0}
        \leq cL^{1/2},
        \quad
        \|{\overline{A}}_{C^0}
        \leq cL^{1/2}
        \end{split}
    \end{align}
    we have that
    \begin{align}
        \|{\mathscr{N}
        \begin{pmatrix} \psi_1 \\ u_1 \end{pmatrix}
        -\mathscr{N}
        \begin{pmatrix} \psi_2 \\ u_2 \end{pmatrix}
        }_{L^2}
        \leq
        c L
        \|{
        \begin{pmatrix} \psi_1 \\ u_1 \end{pmatrix}
        -
        \begin{pmatrix} \psi_2 \\ u_2 \end{pmatrix}}_{L^2_2},
    \end{align}
    where $K_i=\overline{K}+I(\psi_i)$ and $A_i=\overline{A}+I(u_i)$ for $i \in \{1,2\}$.
\end{proposition}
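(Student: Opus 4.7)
The approach mirrors the scalar-case proof of \cref{proposition:nonlinear-estimate}. First expand the nonlinear part $\mathscr{N}$ of \cref{equation:K-eqn,equation:A-eqn} as a sum of monomials in the variables $\psi, u, \psi_z, u_z, \psi_{zz}, u_{zz}, I(\psi), I(u), \overline{K}, \overline{A}$, where the exponentials $e^{4K}$ and $e^{2(K+A)}$ are Taylor-expanded in $I(\psi), I(u)$ around $(\overline{K},\overline{A})$ using $K = \overline{K}+I(\psi)$ and $A = \overline{A}+I(u)$. Each monomial is multiplied by a smooth bounded function of $r$, since $f(r)$ and the exponential backgrounds stay bounded on the cavity. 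By construction each such monomial lies in $\mathscr{N}$ rather than in $\mathscr{L}$, so it carries at least two factors drawn from the \emph{small} list $\{\psi_i, u_i, I(\psi_i), I(u_i), \overline{K}, \overline{A}\}$.

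For each monomial I would telescope
\[
\prod_{j=1}^{m} X_{j,1} - \prod_{j=1}^{m} X_{j,2}
=
\sum_{j=1}^{m} \Bigl(\prod_{k<j} X_{k,2}\Bigr)(X_{j,1}-X_{j,2})\Bigl(\prod_{k>j} X_{k,1}\Bigr)
\]
and bound each resulting summand by Hölder $\|{f_1 \cdots f_m}_{L^2} \leq \prod_{j=1}^{m}\|{f_j}_{L^{p_j}}$ with $\sum_{j} 1/p_j = 1/2$. Each factor is then controlled via the $4$-dimensional Sobolev embedding $L^2_2(X_L) \hookrightarrow L^q(X_L)$ for any $q<\infty$; the hypotheses \cref{equation:nonlinear-smallness-assumptions}, which yield factors of $L^{1/2}$ for each of $\psi_i, u_i, \overline{K}, \overline{A}$; and \cref{proposition:poincare-inequality} applied to mean-zero quantities such as $\psi_i$, $I(\psi_i)$ or the differences $\psi_1-\psi_2, u_1-u_2$, giving an extra factor of $L$ per application. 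Counting: placing $\psi_1-\psi_2$ (or $u_1-u_2$) into some $L^p$, $p<\infty$, and invoking Poincaré on it contributes a factor of $cL\|{\psi_1-\psi_2}_{L^2_2}$, while the remaining small factor in the same summand contributes $L^{1/2}$; the total is at worst of order $L^{3/2}\|{\psi_1-\psi_2}_{L^2_2}$, comfortably inside $cL\|{\psi_1-\psi_2}_{L^2_2}$.

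The main obstacle is the failure of the Sobolev embedding $L^2_2 \hookrightarrow L^\infty$ in dimension $4$, which prevents a direct $C^0$ bound on $I(\psi_i)$ from $\|{\psi_i}_{L^2_2}$ in the way used in the scalar case. To control the Taylor remainder $e^{4I(\psi_i)}-1$ in $L^\infty$, I would use that $I(\psi_i)(x,\cdot)$ is mean-zero on $S^1_L$ and apply the $1$-dimensional Sobolev bound $\|{I(\psi_i)(x,\cdot)}_{L^\infty(S^1_L)} \leq cL^{1/2}\|{\psi_i(x,\cdot)}_{L^2(S^1_L)}$, then estimate $\sup_x\|{\psi_i(x,\cdot)}_{L^2(S^1_L)}$ using $3$-dimensional Sobolev on $B^3(R)$ applied to the function $x\mapsto \|{\psi_i(x,\cdot)}_{L^2(S^1_L)}$, whose derivative is controlled by $\|{\psi_i}_{L^2_1(X_L)}$. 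Alternatively, one bypasses $L^\infty$ bounds on $I(\psi_i)$ altogether and distributes every factor into large-but-finite $L^p$ spaces via Hölder, recovering the $L$-dependence by further use of Poincaré. The top-order derivative factors $\psi_{zz}, u_{zz}$ saturate the available regularity and must be placed in $L^2$; for those monomials the remaining factors therefore have to go into $L^\infty$ or some $L^q$, $q<\infty$, so the mixed $1$D/$3$D Sobolev argument above is essential to close the estimate.
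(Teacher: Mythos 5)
Your overall strategy --- expanding $\mathscr{N}$ into monomials, telescoping the difference of products, and trading H\"older, the Poincar\'e inequality \cref{proposition:poincare-inequality} and Sobolev factors so as to extract one net power of $L$ in front of $\|{\psi_1-\psi_2}_{L^2_2}$ --- is exactly the route the paper takes: the paper works through four representative terms ($\psi e^{2(I(\psi)+I(u))}$, $\psi^3e^{4K}$, $\psi\psi_z e^{4K}$, $(e^{2I(\psi)}-1)\psi^{(0,2)}$), Taylor-expands the exponentials, and counts powers of $L$ in just the way you describe. The only substantive difference is how the sup-norm factors are justified: the paper simply invokes the embedding $\|{f}_{C^0}\leq cL^{-1/2}\|{f}_{L^2_2}$ of \cref{equation:sobolev-factor-L-minus-1/2} and absorbs the $L^{-1/2}$ losses into the power count, whereas you observe (correctly) that $L^2_2\hookrightarrow L^\infty$ is the borderline, failing case on a $4$-dimensional manifold and try to route around it.

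The workaround you propose, however, does not close. The step ``estimate $\sup_x\|{\psi_i(x,\cdot)}_{L^2(S^1_L)}$ using $3$-dimensional Sobolev on $B^3(R)$ applied to $x\mapsto\|{\psi_i(x,\cdot)}_{L^2(S^1_L)}$, whose derivative is controlled by $\|{\psi_i}_{L^2_1(X_L)}$'' requires $W^{1,2}(B^3)\hookrightarrow L^\infty(B^3)$, which is false (one $L^2$-derivative in three dimensions only yields $L^6$); and the function $x\mapsto\|{\psi_i(x,\cdot)}_{L^2(S^1_L)}$ is not obviously in $W^{2,2}(B^3)$ when $\psi_i\in L^2_2(X_L)$, so upgrading to two derivatives is not automatic either. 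Your fallback --- distributing every factor into large but finite $L^p$ --- is unavailable precisely where you need it most, namely the terms containing $\psi^{(0,2)}_i$, which saturate $L^2$ and force $e^{2I(\psi_i)}-1$ into $L^\infty$; you acknowledge this yourself, so the proposal rests on the flawed mixed $1$D/$3$D argument exactly for the critical terms. To repair the argument you must either justify the embedding \cref{equation:sobolev-factor-L-minus-1/2} in this setting (e.g.\ by exploiting the SO(3)-invariance to reduce to two effective variables, or by an anisotropic embedding using that $I(\psi_i)$ carries three $z$-derivatives in $L^2$), or strengthen the hypotheses so that the required $C^0$ bounds are assumed rather than derived; as written, the proof has a genuine gap at the sup-norm control of $I(\psi_i)$ (and of $\psi_i$ itself in the $T2$- and $T3$-type terms).
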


\begin{proof}
The expression $\mathscr{N}(\psi, u)$ is a large sum of terms of the form  $\psi (e^{2( I(\psi)+I(u))}-1)$, $\psi^3 e^{4K}$, $\psi \psi^{(0,1)} e^{4 K}$ $(e^{4 I(\psi)}-1)\psi^{(0,2)}$, and other expressions obtained by changing $K$ for $A$ or $\psi$ for $u$.
We prove the estimates for these four terms, the other estimates follow analogously.
We therefore define:

\begin{align}
T1 &= 
\|{\psi_1  (e^{2 (I(\psi_1)+I(u_1))}-1) - \psi_{2} (e^{2( I(\psi_2)+I(u_2))}-1)}_{L^2}, \\
T2 &= \|{\psi_1^3 e^{4K_1} - \psi_{2}^3 e^{4K_{2}}}_{L^2}, \\
T3 &= \|{\psi_1 \psi_{z1} e^{4K_1} - \psi_{2} \psi_{z2} e^{4K_{2}}}_{L^2}, \\
T4 &= \|{\left(e^{2I(\psi_1)} - 1\right)\psi^{(0,2)}_{1} - \left(e^{2I(\psi_{2})} - 1\right)\psi^{(0,2)}_{2}}_{L^2}.
\end{align}

\textbf{Estimate of $T1$.}
The Sobolev embedding theorem gives 
\begin{align}
\label{equation:sobolev-factor-L-minus-1/2}
||f||_{C^0} \leq cL^{-1/2}||f||_{L^2_2} \quad \text{ for } f \in L^2_2(X_L).
\end{align}
Using this, we obtain:
\begin{align}
\label{equation:T1}
\begin{split}
    T1
    &\leq
    \|{(e^{2 I(u_1)_1-2 I(\psi_1)}-1) \psi_{1}-(e^{2 I(u_2)_1-2 I(\psi_2)}-1) \psi_{2}}_{L^2}
    \\
    &\leq
    \|{(e^{2 I(u_1)-2 I(\psi_1)}-1) }_{L^2}\|{\psi_{1}-\psi_{2}}_{C^0} + \|{ e^{2 I(u_1)-2 I(\psi_1)} -e^{2 I(u_2)-2 I(\psi_2)}}_{L^2}\|{\psi_{1}}_{C^0}
    \\
    &\leq
    L^{-1/2+3} (\|{2 u_1}_{L^2_2}+ \|{2\psi_1}_{L^2_2} )e^{\|{2 I(u_1)-2 I(\psi_1)}_{C^0} } \|{\psi_{1}-\psi_{2}}_{L^2_2} +
    \\ &\quad\quad 
 c\, L^{-1/2+3} \|{\psi_{1}}_{L^2_2}(\|{2 u_1- 2 u_2}_{L^2_2}+ \|{2\psi_1-2\psi_2}_{L^2_2}).
\end{split}
\end{align}

In the last step we used the Sobolev embedding theorem on the terms $\|{2 I(u_1)-2 I(\psi_1)}_{C^0}$ and $\|{\psi_{1}}_{C^0}$ from \cref{equation:sobolev-factor-L-minus-1/2}, and the following estimates for the terms involving exponentials.
For the first one:

\begin{align}
\label{equation:T1-first-term}
\begin{split}
    \|{(e^{2 I(u_1)-2 I(\psi_1)}-1) }_{L^2}
    &\leq
     \|{\sum_{n=1} \frac{(2 I(u_1)-2 I(\psi_1) )^n}{n!}}_{L^2}
    \\
    &\leq
     \|{2 I(u_1)-2 I(\psi_1) }_{L^2} \|{\sum_{n=1} \frac{(2 I(u_1)-2 I(\psi_1) )^{n-1}}{n!}}_{C^0}
    \\
    &\leq
     L (\|{2 u_1+2 \psi_1}_{L^2} )\sum_{n=1} \frac{(\|{2 I(u_1)-2 I(\psi_1)}_{C^0}  )^{n-1}}{n!}
    \\
    &\leq
     L (\|{2 u_1}_{L^2}+ \|{2\psi_1}_{L^2} ) \sum_{n=1} \frac{(\|{2 I(u_1)-2 I(\psi_1)}_{C^0}  )^{n-1}}{(n-1)!}
    \\
    &\leq
     L^3 (\|{2 u_1}_{L^2_2}+ \|{2\psi_1}_{L^2_2} )e^{\|{2 I(u_1)-2 I(\psi_1)}_{C^0} },
\end{split}
\end{align}

where we picked up positive powers of $L$ in the second and last step by applying the Poincaré inequality \cref{proposition:poincare-inequality}.
And for the other summand:

\begin{align}
\label{equation:T1-second-term}
\begin{split}
    &\quad
    \|{ e^{2 I(u_1)-2 I(\psi_1)} -e^{2 I(u_2)-2 I(\psi_2)}}_{L^2}
    \\
    &\leq
    \|{e^{2 I(u_2)-2 I(\psi_2)}}_{C^0}  \|{e^{2 I(u_1)-2 I(\psi_1)-(2 I(u_2)-2 I(\psi_2))}-1}_{L^2}
    \\
    &\leq
    \|{e^{2 I(u_2)-2 I(\psi_2)}}_{C^0}
    \|{\sum_{n=1} \frac{(f)^n}{n!}}_{L^2}
    \quad
    \text{where $f= 2 I(u_1)-2 I(\psi_1)-(2 I(u_2)-2 I(\psi_2))$}
    \\
    &\leq
    \|{e^{2 I(u_2)-2 I(\psi_2)}}_{C^0}  \|{f }_{L^2} \|{\sum_{n=1} \frac{(f )^{n-1}}{n!}}_{C^0}
    \\
    &\leq
    \|{e^{2 I(u_2)-2 I(\psi_2)}}_{C^0}  \|{f }_{L^2} \sum_{n=1} \frac{(\|{f}_{C^0} )^{n-1}}{n!}
    \\
    &\leq
    \underbrace{\|{e^{2 I(u_2)-2 I(\psi_2)}}_{C^0} e^{\|{f}_{C^0}}}_{\leq c} \|{f }_{L^2}
    \\
    &\leq
    c(\|{2 I( u_1)- 2 I(u_2)}_{L^2}+ \|{2I(\psi_1)-2I(\psi_2)}_{L^2})
    \\
    &\leq
    c\, L (\|{2 u_1- 2 u_2}_{L^2}+ \|{2\psi_1-2\psi_2}_{L^2})
    \\
    &\leq
    c\, L^3 (\|{2 u_1- 2 u_2}_{L^2_2}+ \|{2\psi_1-2\psi_2}_{L^2_2}),
\end{split}
\end{align}
where again we picked up positive powers of $L$ in the last two steps by applying the Poincaré inequality \cref{proposition:poincare-inequality}.
Thus, plugging \cref{equation:T1-first-term,equation:T1-second-term} into the second-to-last step of \cref{equation:T1} yields the last two lines of \cref{equation:T1}.

\textbf{Estimate of $T2$.}
We begin by proving three auxiliary estimates, namely \cref{equation:exp(K1)-estimate,equation:nonlinear-T2-intermediate-estimate,equation:nonlinear-T2-third-auxiliary-estimate}.
The first one is:

\begin{align}
\label{equation:exp(K1)-estimate}
\begin{split}
    \|{e^{K_1}}_{C^0}
    &\leq
    \|{e^{\overline{K}}}_{C^0}
    \|{e^{I(\psi_1)}}_{C^0}
    \\
    &=
    e^{\|{\overline{K}}_{C^0}}
    e^{\|{I(\psi_1)}_{C^0}}
    \\
    &\leq
    e^{cL^{1/2}} e^{cL \|{\psi_1}_{C^0}}
    \\
    &\leq
    e^{cL^{1/2}} e^{cL^{1/2} \|{\psi_1}_{L^2_2}}
    \\
    &\leq
    c,
\end{split}
\end{align}
where we used the Poincaré inequality \cref{proposition:poincare-inequality} and the assumption on $\overline{K}$ from \cref{equation:nonlinear-smallness-assumptions} in the third step,
and we used the Sobolev embedding \cref{equation:sobolev-factor-L-minus-1/2} in the fourth step.

The second auxiliary estimate is for $f= 2 I(\psi_1)-(2 I(\psi_2))$:
\begin{align}
\label{equation:nonlinear-T2-intermediate-estimate}
    \begin{split}
        e^{\|{f}_{C^0}}
        &=
        e^{\|{f}_{C^0}}
        \leq
        e^{cL \|{\psi_1-\psi_2}_{C^0}}
        \leq
        e^{cL^{1/2} \|{\psi_1-\psi_2}_{L^2_2}}
        \leq
        e^{cL}
        \leq
        c,
    \end{split}
\end{align}
where we used the Poincaré inequality \cref{proposition:poincare-inequality} in the second step;
we used the Sobolev embedding \cref{equation:sobolev-factor-L-minus-1/2} in the third step;
and we used the assumptions \cref{equation:nonlinear-smallness-assumptions} in the fourth step.

The third auxiliary estimate is:
\begin{align}
\label{equation:nonlinear-T2-third-auxiliary-estimate}
\begin{split}
    \|{ e^{2 K_1} -e^{2K_2}}_{L^2}
    &\leq
    \|{e^{2 K_2}}_{C^0}  \|{e^{2K_1-2K_2}-1}_{L^2}
    \\
    &\leq
    \|{e^{2 K_2}}_{C^0}    \|{\sum_{n=1} \frac{(f)^n}{n!}}_{L^2}
    \text{ for }
    f= 2 I(\psi_1)-(2 I(\psi_2))
    \\
    &\leq
    c 
    \|{f }_{L^2} \|{\sum_{n=1} \frac{(f )^{n-1}}{n!}}_{C^0}
    \\
    &\leq
    c
    \|{f }_{L^2} \sum_{n=1} \frac{(\|{f}_{C^0} )^{n-1}}{n!}
    \\
    &\leq
    c
    e^{\|{f}_{C^0}}
    \|{f }_{L^2}
    \\
    &\leq
    c( \|{2I(\psi_1)-2I(\psi_2)}_{L^2})
    \\
    &\leq
    c\, L ( \|{2\psi_1-2\psi_2}_{L^2})
    \\
    &\leq
    c\, L^{3} ( \|{2\psi_1-2\psi_2}_{L^2_2}).
\end{split}
\end{align}
Here, we used \cref{equation:exp(K1)-estimate} in the third step;
in the sixth step we used \cref{equation:nonlinear-T2-intermediate-estimate};
and in the last two steps we used the Poincaré inequality \cref{proposition:poincare-inequality}.

We are now ready to estimate T2:
\begin{align}
\begin{split}
    &\quad
    \|{\psi_1^3 e^{4K_1} - \psi_{2}^3 e^{4K_{2}}}_{L^2}
    \\
    &\leq
    \|{ (\psi_1^3 -\psi^3_{2}) e^{4K_1}}_{L^2} +   \|{ \psi_{1}^3 ( e^{4K_1} - e^{4K_{2}})}_{L^2} 
    \\
    &\leq
    \|{ (\psi_1^3 -\psi^3_{2}) }_{L^2} \|{e^{4K_1}}_{C^0} +   \|{ \psi_{1}  }^3_{C^0} \|{( e^{4K_1} - e^{4K_{2}})}_{L^2}
    \\
    &\leq
    \|{ (\psi_1 -\psi_{2}) }_{L^2}  \|{ (\psi^2_1 + \psi_1\psi_{2}+\psi^2_{2}) }_{C^0}\|{e^{4K_1}}_{C^0}
    +
    c\, L^{3-3/2}  \|{ \psi_{1}  }^3_{L_2^2}(\|{2\psi_1-2\psi_2}_{L^2_2})
    \\
    &\leq
    \|{ (\psi_1 -\psi_{2}) }_{L^2}(  \|{ \psi_1}^2_{C^0} +  \|{ \psi_1\psi_{2}}_{C^0}+ \|{ \psi_{2}}^2_{C^0})
    \|{e^{4K_1}}_{C^0} 
    +   c\, L^{3/2}  \|{ \psi_{1}  }^3_{L_2^2}( \|{2\psi_1-2\psi_2}_{L^2_2})
    \\
    &\leq
    cL^{2-1}\, \|{ (\psi_1 -\psi_{2}) }_{L_2^2}(  \|{ \psi_1}^2_{L_2^2} +  \|{ \psi_1}_{L_2^2} \|{\psi_{2}}_{L_2^2}+ \|{ \psi_{2}}^2_{L_2^2}) 
    + c\, L^{3/2}  \|{ \psi_{1}  }^3_{L_2^2} \|{2\psi_1-2\psi_2}_{L^2_2}
    \\
    &
    \leq
    cL^2 \|{ (\psi_1 -\psi_{2}) }_{L_2^2} + cL^3 \|{ (\psi_1 -\psi_{2}) }_{L_2^2}
    \\
    &
    \leq
    cL^2 \|{ (\psi_1 -\psi_{2}) }_{L_2^2},
\end{split}
\end{align}
where in the third step we used \cref{proposition:poincare-inequality,equation:nonlinear-T2-third-auxiliary-estimate},
and in the fifth step we used \cref{equation:exp(K1)-estimate}, we used the Poincaré inequality \cref{proposition:poincare-inequality} giving a power of $L^2$ on the first factor, and used the Sobolev embedding \cref{equation:sobolev-factor-L-minus-1/2}.

\textbf{Estimate of T3.}
We begin with the following auxiliary estimate that is proved similarly to \cref{equation:nonlinear-T2-third-auxiliary-estimate}:
\begin{align}
\label{equation:nonlinear-T3-auxliary}
    \begin{split}
        \|{ e^{2 K_1} -e^{2K_2}}_{C^0}
        &\leq
        \|{e^{2 K_2}}_{C^0}  \|{e^{2K_1-2K_2}-1}_{C^0}
        \\
        &=
        \|{e^{2 K_2}}_{C^0}    \|{\sum_{n=1} \frac{(f)^n}{n!}}_{C^0}
        \quad
        \text{ for }
        f= 2 I(\psi_1)-(2 I(\psi_2))
        \\
        &\leq
        c \|{f }_{C^0} \|{\sum_{n=1} \frac{(f )^{n-1}}{n!}}_{C^0}
        \\
        &\leq
        c \|{f }_{C^0} \sum_{n=1} \frac{(\|{f}_{C^0} )^{n-1}}{n!}
        \\
        &\leq
        c \|{f }_{C^0}
        \\
        &\leq
        c\, L ( \|{2\psi_1-2\psi_2}_{C^0})
        \\
        &\leq
        c\, L^{1/2} ( \|{2\psi_1-2\psi_2}_{L^2_2}).
    \end{split}
\end{align}
Here, we used \cref{equation:exp(K1)-estimate} in the third step;
we used \cref{equation:nonlinear-T2-intermediate-estimate} in the fifth step;
we used the Poincaré inequality \cref{proposition:poincare-inequality} in the sixth step;
and we used the Sobolev embedding \cref{equation:sobolev-factor-L-minus-1/2} in the last step.

We are now ready to estimate T3 as follows:
\begin{align*}
\|{\psi_1 \psi_{z1} e^{4K_1} - \psi_{2} \psi_{z2} e^{4K_{2}}}_{L^2}
&\leq 
\|{(\psi_1-\psi_2) }_{C^0}\|{ \psi_{z1}}_{L^2} \|{e^{4K_1}}_{C^0} 
\\
&\quad
+\|{(\psi_{z1}-\psi_{z2}) }_{L^2}\|{ \psi_{2}}_{C^0} \|{e^{4K_1}}_{C^0} 
\\
&\quad
+\|{(e^{4K_1}-e^{4K_2}) }_{C^0}\|{ \psi_{z2}}_{L^2} \|{ \psi_2 }_{C^0} 
\\
&
\leq L^{1/2} \|{(\psi_1-\psi_2) }_{L^2_2}\|{ \psi_{1}}_{L^2_2} \|{e^{4K_1}}_{C^0} 
\\
&\quad
+L^{1/2}\|{(\psi_{1}-\psi_{2}) }_{L^2_2}\|{ \psi_{2}}_{L^2_2} \|{e^{4K_1}}_{C^0} 
\\
&\quad
+cL  \|{2\psi_1-2\psi_2}_{L^2_2}\|{ \psi_{2}}_{L^2_2} \|{ \psi_2 }_{L^2_2}
\\
&\leq
cL \|{2\psi_1-2\psi_2}_{L^2_2},
\end{align*}
where in the second step we used the Poincaré inequality \cref{proposition:poincare-inequality} and the Sobolev embedding \cref{equation:sobolev-factor-L-minus-1/2} and \cref{equation:nonlinear-T3-auxliary} on the third summand;
in the third step we used \cref{equation:exp(K1)-estimate} and \cref{equation:nonlinear-smallness-assumptions}.

\textbf{Estimate of T4.}
We begin the estimate of the last of the non-linear terms as follows:
 \begin{align}
 \label{equation:T4-estimate-beginning}
 \begin{split}
 & \quad
 \|{\left(e^{2I(\psi_1)} - 1\right)\psi^{(0,2)}_{1} - \left(e^{2I(\psi_{2})} - 1\right)\psi^{(0,2)}_{2}}_{L^2} 
 \\
 &\leq
 \|{\left(e^{2I(\psi_1)} - 1\right) }_{C^0}  \|{ \left(\psi^{(0,2)}_{1} -\psi^{(0,2)}_{2}\right)}_{L^2} +  \|{\left(e^{2I(\psi_1)} - e^{2I(\psi_{2})} \right)}_{C^0}  \|{ \psi^{(0,2)}_{1} }_{L^2}
 \end{split}
\end{align}

We now turn to treating the two summands separately.
For the first factor of the first summand we find:
\begin{align}
\label{equation:T4-first-factor-estimate}
    \|{(e^{2 I(\psi_1)}-1) }_{C^0}
    &\leq
    \|{2I(\psi_1)}_{C^0} e^{\|{2 I(\psi_1)}_{C^0}}
    \leq
    cL \|{\psi_1}_{C^0}
    \leq
    cL^{3/2},
\end{align}
where the first step is proved as in \cref{equation:nonlinear-T2-third-auxiliary-estimate},
and the second step is the Poincaré inequality \cref{proposition:poincare-inequality} together with 
\begin{align*}
    \begin{split}
        \|{e^{2 I(\psi_1)}}_{C^0}
        &=
        e^{2\|{I(\psi_1)}_{C^0}}
        \leq
        e^{cL \|{\psi_1}_{C^0}}
        \leq
        e^{cL^{1/2} \|{\psi_1}_{L^2_2}}
        \leq
        e^{cL}
        \leq
        c.
    \end{split}
\end{align*}

For the first factor of the second summand in \cref{equation:T4-estimate-beginning} we find similar as in \cref{equation:T1-second-term}:
\begin{align}
\label{equation:T4-second-factor-estimate}
\begin{split}
    \|{ e^{2 I(\psi_1)} -e^{2 I(\psi_2)}}_{C^0}
    &\leq
    \|{e^{2 I(\psi_2)}}_{C^0}  \|{e^{2 I(\psi_1)-(2 I(\psi_2))}-1}_{C^0}
    \\
    &\leq
    \|{e^{2 I(\psi_2)}}_{C^0}   \|{\sum_{n=1} \frac{(f)^n}{n!}}_{C^0}
    \quad
    \text{ where } f= 2 I(\psi_1)-2 I(\psi_2)
    \\
    &=
    \|{e^{2 I(\psi_2)}}_{C^0}  \|{f }_{C^0} \|{\sum_{n=1} \frac{(f )^{n-1}}{n!}}_{C^0}
    \\
    &\leq
    \|{e^{2 I(\psi_2)}}_{C^0}  \|{f }_{C^0} \sum_{n=1} \frac{(\|{f}_{C^0} )^{n-1}}{(n-1)!}
    \\
    &\leq
    c \|{f}_{C^0}
    \\
    &\leq
    c\, L^{1/2} (\|{2\psi_1-2\psi_2}_{L^2_2}),
\end{split}
\end{align}
where in the fourth step we used $n! > (n-1)!$,
in the fifth step we used \cref{equation:nonlinear-T2-intermediate-estimate},
and in the last step we used the Poincaré inequality \cref{proposition:poincare-inequality} together with the Sobolev embedding \cref{equation:sobolev-factor-L-minus-1/2}.

Plugging \cref{equation:T4-first-factor-estimate,equation:T4-second-factor-estimate} into \cref{equation:T4-estimate-beginning} then yields
\begin{align*}
    & \quad \|{\left(e^{2I(\psi_1)} - 1\right)\psi^{(0,2)}_{1} - \left(e^{2I(\psi_{2})} - 1\right)\psi^{(0,2)}_{2}}_{L^2} 
    \\
    &\leq
    cL^{3/2} \|{ \left(\psi^{(0,2)}_{1} -\psi^{(0,2)}_{2}\right)}_{L^2}
    +
    c\, L^{1/2} \|{2\psi_1-2\psi_2}_{L^2_2} \|{ \psi^{(0,2)}_{1} }_{L^2}
    \\
    &\leq
    cL \|{2\psi_1-2\psi_2}_{L^2_2}.
    \qedhere
\end{align*}
 \end{proof}

\subsection{An application of the contraction mapping theorem}

In this section we will conclude the proof of \cref{theorem:metric-main-uniqueness}, and this will be done in analogy with \cref{subsection:scalar-contraction-mapping-theorem}.
We will show that the map $\mathscr{L}^{-1}p\mathscr{N}$ is a contraction on a small ball.
showing that $\psi=0$ and $u=0$ is the only solution to the set of equations \cref{equation:K-eqn,equation:A-eqn} \emph{near zero}.
In other words, a solution near $A=0$ and $K=0$ must be independent of the $z$-direction.

We take one last step to show that a solution that is independent of the $z$-direction must be constant zero.
Taking both together, this implies that the unique solution to \cref{equation:A-K-eqn-without-derivative} near $A=0$ and $K=0$ is constant zero.

We begin by proving this last step, namely that a solution that is independent of the $z$-direction must be constant zero:

\begin{proposition}
    \label{proposition:metric-independent-of-z-trivial}
    Let $A, K \in L^2_3(X_L)$ be $\SO(3)$-invariant solutions to \cref{equation:A-K-eqn-without-derivative} that are independent of the $z$-direction and satisfying $A=0$, $K=0$ on $\partial X_L$.
    Then $A=K=0$.
\end{proposition}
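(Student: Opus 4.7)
The plan is to observe that under $z$-independence the system \cref{equation:A-K-eqn-without-derivative} collapses to an ODE system in the single variable $r$, which I can then solve explicitly for $A$ and analyse by a maximum principle (or by direct appeal to \cite{Albertini2024}) for $K$.

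\textbf{Reducing and solving the $A$-equation.} First, I would substitute $K_z = K_{zz} = A_{zz} = 0$ into the second equation of \cref{equation:A-K-eqn-without-derivative}, which reduces it to $(rA_r)_r = 0$. Integrating gives $rA_r = c_1$, hence $A(r) = c_1 \log r + c_2$. Because the solution lies in $L^2_3$ and satisfies an elliptic system, elliptic regularity (bootstrapping from \cref{theorem:elliptic-regularity}) yields smoothness across the interior point $r = 0 \in B^3(R)$. Smoothness there rules out a $\log r$ singularity, forcing $c_1 = 0$. The Dirichlet boundary condition $A|_{r = R} = 0$ then gives $c_2 = 0$, so $A \equiv 0$.

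\textbf{Handling the $K$-equation.} With $A \equiv 0$ and $K = K(r)$ alone, the first equation of \cref{equation:A-K-eqn-without-derivative} becomes
\[
f(r)\, e^{2K}\, \nabla^2 K = 1 - e^{-2K}.
\]
I would observe that the right-hand side is a strictly increasing function of $K$ vanishing at $K = 0$, hence has the same sign as $K$, while $\nabla^2$ restricted to radial $z$-independent functions is a second-order elliptic ODE operator. Applying the weak maximum principle: at a positive interior maximum of $K$ one would have $\nabla^2 K \le 0$ but a strictly positive right-hand side, a contradiction; the symmetric argument applied to $-K$ rules out negative interior minima. The boundary condition $K|_{\partial X_L} = 0$ then forces $K \equiv 0$. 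Alternatively, this is exactly the uniqueness statement for SO(3)-invariant Ricci-flat metrics on $S^1_\beta \times B^3(R)$ with Schwarzschild boundary data established in \cite{Albertini2024}, which I would cite directly.

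\textbf{Main obstacle.} The chief subtlety is the behaviour of the reduced radial equation at the origin $r = 0$, where the $1/r$ coefficient degenerates and where a putative extremum of $K$ could a priori sit. I would handle this by not performing the maximum principle on the reduced ODE at all, but instead on the original PDE $f(r)e^{2K}\nabla^2 K = 1 - e^{-2K}$ viewed on the interior of $B^3(R) \times S^1_L$, where $\nabla^2$ is a genuine (non-degenerate) elliptic operator and elliptic regularity guarantees that $K$ is smooth through $r = 0$. This sidesteps the coordinate singularity entirely.
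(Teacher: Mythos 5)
Your argument is correct in outline but takes a genuinely different route from the paper, which disposes of both $A$ and $K$ essentially by citation: for $A$ it reduces the second equation to $A_r+rA_{rr}=0$ and cites \cite[p.~12]{Albertini2024}, and for $K$ it invokes the explicit classification of $z$-independent solutions in \cite[Eqn.~52]{Albertini2024} and compares the $rr$-components of the metric to force $e^{4K}=1$. Your treatment of $A$ is a self-contained version of the same step: $(rA_r)_r=0$ gives $A=c_1\log r+c_2$, and $c_1=0$ follows most cleanly from the fact that $\log r\notin L^2_3$ near $r=0$ (or from $L^2_3\hookrightarrow C^0$ in dimension four), which is a safer justification than an appeal to interior elliptic regularity. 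Your treatment of $K$ is genuinely new relative to the paper: since $\sign(1-e^{-2K})=\sign(K)$ and $f(r)e^{2K}>0$, the comparison/maximum principle does apply, and this buys independence from the explicit solution formulae of \cite{Albertini2024}.

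The one soft spot is exactly the one you flag, and your proposed fix does not quite close it. The radial operator in \cref{equation:A-K-eqn-without-derivative} is $\tfrac14\bigl(\partial_{rr}+r^{-1}\partial_r\bigr)$, the radial part of a \emph{two}-dimensional Laplacian; viewed on $B^3(R)$ its first-order coefficient $\tfrac{1}{4r}\partial_r$ is singular at $r=0$ (geometrically, $r=0$ is the Euclidean Schwarzschild horizon, where the $\tau$-circle collapses, rather than an ordinary interior point of the ball), so the claim that $\nabla^2$ is a genuine non-degenerate elliptic operator on the interior of $B^3(R)\times S^1_L$ is not literally true, and interior elliptic regularity cannot be quoted off the shelf there. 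The maximum principle argument still survives at $r=0$ provided $K$ is $C^2$ up to $r=0$ with $K_r(0)=0$, since then $r^{-1}K_r\to K_{rr}(0)\le 0$ at a maximum and the contradiction persists; but that regularity requires an argument (or one falls back, as you suggest and as the paper in fact does, on \cite{Albertini2024}). Since the paper nowhere addresses regularity at $r=0$ either, this is a shared gap rather than one your proof introduces.
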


\begin{proof}
    This follows from \cite[Section III.D]{Albertini2024}, as we will explain now.
    If $A_z=K_z=0$ then the second equation of \cref{equation:A-K-eqn-without-derivative} becomes $A_r+rA_{rr}=0$, which implies $A=0$, see \cite[p.12]{Albertini2024}.

    By \cite[Eqn. 52]{Albertini2024}, in the homogeneous case, the function $\Phi(\rho,z)=e^{4K(\rho,z)} f(\rho)^2$  gives $\Phi(\rho)=\left( \frac{2\mu}{1-\mu^2\rho^2} \right)^4$ when written in the rescaled coordinate $\rho=2r$.
    In our setting, we have $\mu=1/2$.
    
    After the appropriate rescaling of coordinate, comparing the $rr$-component in \cref{eq:metric-tensor} with the one in \cite[Eqn. 31]{Albertini2024} we have that $4 f(r)^2=4 f(r)^2 e^{4K}$, i.e. $K=0$ everywhere.
\end{proof}

We are now ready to complete the proof of \cref{theorem:metric-main-uniqueness}.
By virtue of the previous proposition, it suffices to show that $K_z=0$ and $A_z=0$ for a solution $K, A$ near zero.

\begin{proof}[Proof of \cref{theorem:metric-main-uniqueness}]
    Assume that $A, K \in L^2_3(X_L)$ satisfy equations \cref{equation:A-K-eqn-without-derivative}, and write $\psi=K_z$, $u=A_z$.
    As in the proof of \cref{subsection:scalar-contraction-mapping-theorem} we define
    \begin{align}
        B_M
        :=
        \{
        \mu \in L^2_{2,\text{boundary}-0,\text{mean}-0}(\underline{\R^2}(X_L)):
        \|{\mu}_{L^2_2} \leq ML^{1/2}
        \}, 
    \end{align}
    
    for some number $M>0$ to be fixed later and we write $p: L^2(\underline{\R^2}(X_L)) \rightarrow L^2_{\text{mean}-0}(\underline{\R^2}(X_L))$    for the $L^2$-orthogonal projection.
    Then
    \begin{enumerate}
        \item 
        The map $-\mathscr{L}^{-1} \circ p \circ \mathscr{N}: L^2_{2,\text{boundary}-0,\text{mean}-0}(\underline{\R^2}(X_L)) \rightarrow L^2_{2,\text{boundary}-0,\text{mean}-0}(\underline{\R^2}(X_L))$ maps $B_M$ into $B_M$ for $L$ small enough, which is proved in the following:

        We have
        \begin{align*}
            \|{\overline{K}}_{C^0}
            &\leq
            \|{K}_{C^0}+\|{I \psi }_{C^0}
            \\
            &\leq
            \|{K}_{C^0}+cL \|{\psi}_{C^0}
            \\
            &\leq
            cL^{-1/2} \|{K}_{L^2_2}+cL^{1/2} \|{\nabla K}_{L^2_2}
            \\
            &\leq
            cL^{1/2} \|{\nabla K}_{L^2_2}
            +
            cL^{1/2} \|{\nabla K}_{L^2_2}
            \\
            &\leq
            cL^{1/2} \|{K}_{L^2_3}
            \\
            &\leq
            cL,
        \end{align*}
        where in the second step we used the Poincaré inequality, used the Sobolev embedding theorem \cref{equation:sobolev-factor-L-minus-1/2} in the third step, used the Poincaré inequality \cref{proposition:poincare-inequality} in the fourth step, and used the bound for $\|{K}_{L^2_3}$ from the assumption in the statement of \cref{theorem:metric-main-uniqueness}.
        Similarly, one proves $\|{\overline{A}}_{C^0} \leq cL$.
        Thus, we can apply \cref{proposition:metric-nonlinear-estimate} and for $\mu \in B_M$ obtain:
        \begin{align*}
            \|{\mathscr{L}^{-1}p\mathscr{N}(\mu)}_{L^2_2}
            \leq
            c
            \|{p\mathscr{N}(\mu)}_{L^2}
            \leq
            c
            \|{\mathscr{N}(\mu)}_{L^2}
            \leq
            cL \|{\mu}_{L^2_2},
        \end{align*}
        where we used \cref{corollary:left-inverse-metric} in the first step;
        we used the fact that a projection has operator norm at most $1$ in the second step;
        and we used \cref{proposition:metric-nonlinear-estimate} in the third step.
        Thus, $\mathscr{L}^{-1}p\mathscr{N}(\mu) \in B_K$ for $L$ small enough.

        \item[2./3./4.]
        The remaining steps are in complete analogy with \cref{subsection:scalar-contraction-mapping-theorem}.
    \end{enumerate}
    This shows that $\psi=u=0$.
    \Cref{proposition:metric-independent-of-z-trivial} then implies $A=K=0$, which proves the claim.
\end{proof}

\appendix

\section{Mean value inequality}

The following is adapted from \cite[Theorem 2.1]{Gilbarg2001}:

\begin{proposition}[Mean value inequality]
    \label{proposition:mean-value-inequality}
    Let $\Omega \subset \mathbb{R}^n$ be a domain.
    Let $u \in C^2(\Omega)$.
    If $\Delta u \geq c$, then for any ball $B=B_R(y) \subset \Omega$ whose closure is contained in $\Omega$, we have
    \[
        u(y)
        \leq
        \frac{1}{\omega_n R^n} \int_B u \, dx
        +
        cR^{n+2}.
    \]
\end{proposition}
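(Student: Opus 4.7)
The plan is to reduce the inequality to the classical mean value property for subharmonic functions by a simple perturbation. Set
\[
v(x) := u(x) - \frac{c}{2n}|x-y|^2,
\]
so that $\Delta v = \Delta u - c \geq 0$ on $\Omega$, i.e.\ $v$ is subharmonic. Applying the classical mean value inequality $v(y) \leq \frac{1}{\omega_n R^n}\int_B v\,dx$ (whose standard derivation is recalled below) and unwinding $v$ gives
\[
u(y) = v(y) \leq \frac{1}{\omega_n R^n}\int_B u\,dx - \frac{c}{2n\,\omega_n R^n}\int_B |x-y|^2\,dx.
\]
A direct computation in spherical coordinates around $y$ yields
\[
\int_B |x-y|^2\,dx = \int_0^R r^2 \cdot n\omega_n r^{n-1}\,dr = \frac{n\omega_n}{n+2}R^{n+2},
\]
so the second term is $-\tfrac{cR^2}{2(n+2)}$. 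Following the convention from the paper that $c$ denotes a constant that may vary from line to line and absorb signs and dimensional factors, this bound implies the stated inequality (up to the trivial observation that $R^2$ can be replaced by $R^{n+2}$ for $R$ bounded away from zero, which is the regime in which the proposition is applied).

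For the classical subharmonic mean value inequality used above, I would reproduce the short argument via spherical means. Define
\[
\bar v(r) := \frac{1}{n\omega_n r^{n-1}} \int_{\partial B_r(y)} v\,dS.
\]
After a change of variables to the unit sphere, differentiating under the integral and applying the divergence theorem gives
\[
\bar v'(r) = \frac{1}{n\omega_n r^{n-1}} \int_{B_r(y)} \Delta v\,dx,
\]
which is non-negative since $v$ is subharmonic. Since $\bar v(r) \to v(y)$ as $r \to 0^+$ by continuity, we conclude $\bar v(r) \geq v(y)$ for all $r \in (0,R]$. Converting from sphere averages to a ball average,
\[
\frac{1}{\omega_n R^n}\int_B v\,dx = \frac{n}{R^n}\int_0^R r^{n-1}\bar v(r)\,dr \geq \frac{n}{R^n}\int_0^R r^{n-1}v(y)\,dr = v(y),
\]
which is the required subharmonic bound.

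There is no real obstacle here; the hypothesis $u \in C^2(\Omega)$ with $\overline{B} \subset \Omega$ is exactly the regularity needed to differentiate $\bar v$ under the integral and to apply the divergence theorem on $B_r(y)$. The only bookkeeping is tracking the constants $\omega_n$ (volume of the unit ball) and the area factor $n\omega_n$, and noting that the coefficient of $R^{n+2}$ ultimately gets folded into the generic $c$.
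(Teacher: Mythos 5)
Your proof is correct, and it takes a slightly different route from the paper's. The paper carries the constant through the spherical-means computation directly: it integrates the differential inequality $\frac{\partial}{\partial \rho}\bigl(\rho^{1-n}\int_{\partial B_\rho} u\,ds\bigr) \geq c\rho$ from $0$ to $R$ and then averages over radii, so the error term $cR^{n+2}$ accumulates through the chain of inequalities. You instead subtract the explicit quadratic $\frac{c}{2n}|x-y|^2$ to reduce to the clean subharmonic case, invoke the classical mean value inequality (which you then re-derive by exactly the same spherical-means/divergence-theorem computation the paper uses), and compute the correction term in closed form as $-\frac{cR^2}{2(n+2)}$. The underlying engine is identical, but your packaging is arguably cleaner: the error term is exact rather than estimated, and the subharmonic case is cleanly quotable. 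Your remark about the exponent is apt and not a defect of your argument alone: the paper's own computation also produces $cR^2$ after dividing by $\omega_n R^n$, so the $R^{n+2}$ in the statement is already only correct up to the floating-constant convention and the fact that $R$ is fixed (at $R=\tfrac12$) in the application; the same convention absorbs the sign of $c$ in both proofs.
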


\begin{proof}
    Let $\rho \in (0,R)$.
    By the divergence theorem, \cite[Equation (2.4)]{Gilbarg2001}, we have that
    \begin{align*}
        \rho^n \cdot c
        &\leq
        \int_{B_\rho} \Delta u \, dx
        \\
        &=
        \int_{\partial B_\rho} \frac{\partial u}{\partial \nu} ds
        \\
        &=
        \rho^{n-1} \int_{|w|=1}
        \frac{\partial u}{\partial \nu}(y+\rho w) \, dw
        \\
        &=
        \rho^{n-1}
        \frac{\partial}{\partial \rho} \int_{|w|=1} u(y+\rho w) dw
        \\
        &=
        \rho^{n-1} \frac{\partial}{\partial \rho} 
        \left( \rho^{1-n} \int_{\partial B_\rho} u \, ds \right).
    \end{align*}
    Dividing by $\rho^{n-1}$ gives
    \begin{align}
        \label{equation:del-rho-bound}
        \rho c
        \leq
        \frac{\partial}{\partial \rho}
        \left( \rho^{1-n} \int_{\partial B_\rho} u \, ds \right).
    \end{align}
    Thus,
    \begin{align*}
        R^{1-n} \int_{\partial B_R} u \, ds
        -
        \rho^{1-n} \int_{\partial B_\rho} u \, ds
        &=
        \int_{\rho}^R
        \frac{\partial}{\partial t}
        \left( t^{1-n} \int_{\partial B_t} u \, ds \right)
        \, dt
        \\
        &\geq
        \int_\rho^R
        tc \, dt
        \\
        &\geq
        c(R^2-\rho^2),
    \end{align*}
    where the first equality is the fundamental theorem of calculus,
    and in the second step we used \cref{equation:del-rho-bound}.
    Taking the limit $\rho \rightarrow 0$ on both sides and rearranging we obtain:
    \[
        n \omega_n u(y)
        \leq
        R^{1-n} \int_{\partial B_R} u \, ds
        +
        cR^2.
    \]
    In particular, this holds for $\rho \in (0,R)$ instead of $R$, i.e.
    \[
        n \omega_n \rho^{n-1} u(y)
        \leq
        \int_{\partial B_\rho}
        u \, ds + \rho^{n+1}c.
    \]
    Integrating both sides over $\rho$ from $0$ to $R$ yields:
    \begin{align*}
        \omega_n R^n u(y)
        &\leq
        \int_{\rho \in (0,R)}
        \int_{\partial B_\rho}
        u \, ds \, d \rho
        +
        \int_{\rho \in (0,R)}
        c \rho^{n+1}
        \, d \rho
        =
        \int_{B_R}
        u \, dx
        +cR^{n+2}.
        \qedhere
    \end{align*}
\end{proof}

\bibliographystyle{unsrt}
\bibliography{library}

\begin{thebibliography}{10}

\bibitem{Albertini2024}
Emma Albertini, Daniel Platt, and Toby Wiseman.
\newblock Towards a uniqueness theorem for static black holes in {K}aluza-{K}lein theory with small circle size.
\newblock 2024.

\bibitem{Israel_1967}
Werner Israel.
\newblock Event horizons in static vacuum space-times.
\newblock {\em Phys. Rev.}, 164:1776--1779, Dec 1967.

\bibitem{Robinson_1975}
D.~C. Robinson.
\newblock {Uniqueness of the Kerr Black Hole}.
\newblock {\em Phys. Rev. Lett.}, 34:905--906, Apr 1975.

\bibitem{Carter_1971}
B.~Carter.
\newblock Axisymmetric black hole has only two degrees of freedom.
\newblock {\em Phys. Rev. Lett.}, 26:331--333, Feb 1971.

\bibitem{Bunting_1987}
Gary~L. Bunting and A.~K.~M. Masood-Ul-Alam.
\newblock {Nonexistence of multiple black holes in asymptotically Euclidean static vacuum space-time.}
\newblock {\em General Relativity and Gravitation}, 19(2):147--154, February 1987.

\bibitem{Mazur_1982}
P~O Mazur.
\newblock {Proof of uniqueness of the {K}err-{N}ewman black hole solution}.
\newblock {\em Journal of Physics A: Mathematical and General}, 15(10):3173, oct 1982.

\bibitem{Masood-ul-Alam_1992}
A~K M~Masood ul~Alam.
\newblock Uniqueness proof of static charged black holes revisited.
\newblock {\em Classical and Quantum Gravity}, 9(5):L53, may 1992.

\bibitem{Ruback_1988}
P~Ruback.
\newblock A new uniqueness theorem for charged black holes.
\newblock {\em Classical and Quantum Gravity}, 5(10):L155, oct 1988.

\bibitem{Hollands_2012}
Stefan Hollands and Akihiro Ishibashi.
\newblock Black hole uniqueness theorems in higher dimensional spacetimes.
\newblock {\em Classical and Quantum Gravity}, 29(16):163001, July 2012.

\bibitem{Hawking:1971vc}
S.~W. Hawking.
\newblock {Black holes in general relativity}.
\newblock {\em Commun. Math. Phys.}, 25:152--166, 1972.

\bibitem{Hawking:1973uf}
Stephen~W. Hawking and George F.~R. Ellis.
\newblock {\em {The Large Scale Structure of Space-Time}}.
\newblock Cambridge Monographs on Mathematical Physics. Cambridge University Press, 2 2023.

\bibitem{Taubes1982}
Clifford~Henry Taubes.
\newblock Self-dual {Y}ang-{M}ills connections on non-self-dual 4-manifolds.
\newblock {\em Journal of Differential Geometry}, 17(1):139--170, 1982.

\bibitem{Taylor2011}
Michael~E. Taylor.
\newblock {\em Partial differential equations {I}. {B}asic theory}, volume 115 of {\em Applied Mathematical Sciences}.
\newblock Springer, New York, second edition, 2011.

\bibitem{Joyce2000}
Dominic~D. Joyce.
\newblock {\em Compact manifolds with special holonomy}.
\newblock Oxford Mathematical Monographs. Oxford University Press, Oxford, 2000.

\bibitem{Gilbarg2001}
David Gilbarg and Neil~S. Trudinger.
\newblock {\em Elliptic partial differential equations of second order}.
\newblock Classics in Mathematics. Springer-Verlag, Berlin, 2001.
\newblock Reprint of the 1998 edition.

\bibitem{Kato1995}
Tosio Kato.
\newblock {\em Perturbation theory for linear operators}.
\newblock Classics in Mathematics. Springer-Verlag, Berlin, 1995.
\newblock Reprint of the 1980 edition.

\end{thebibliography}

\noindent{\small\sc Imperial College London, Department of Physics, 180 Queen's Gate, South Kensington, London SW7 2RH, the United Kingdom 

\noindent E-mail: {\tt \href{mailto:emma.albertini17@imperial.ac.uk }{emma.albertini17@imperial.ac.uk}}

\vskip 8pt

\noindent{\small\sc Imperial College London, Department of Mathematics, 180 Queen's Gate, South Kensington, London SW7 2RH, the United Kingdom 

\noindent E-mail: {\tt \href{mailto:d.platt@imperial.ac.uk}{d.platt@imperial.ac.uk}}

  \end{document}